\newtheorem{theorem}{Theorem}
\numberwithin{theorem}{section}
\newtheorem{corollary}[theorem]{Corollary}
\newtheorem{proposition}[theorem]{Proposition}
\date{}
\title{Change of basis for $\mathfrak{m}$-primary ideals in one and two variables}
\author{
Seung Gyu Hyun\thanks{University of Waterloo, Waterloo, ON, Canada (kevin.hyun@edu.uwaterloo.ca)}
\and
Stephen Melczer\thanks{University of Pennsylvania, 209 S. 33rd Street, Philadelphia, PA, USA (smelczer@sas.upenn.edu)}
\and
\'Eric Schost\thanks{University of Waterloo, Waterloo, ON, Canada (eschost@uwaterloo.ca)}
\and
Catherine St-Pierre\thanks{University of Waterloo, Waterloo, ON, Canada (catherine.st-pierre@uwaterloo.ca)}
}
\def\M{\mathsf{M}}
\def\A{\mathbb{A}}
\def\F{\mathbb{F}}
\def\Q{\mathbb{Q}}
\def\K{\mathbb{K}}
\def\L{\mathbb{L}}
\def\N{\mathbb{N}}
\def\frakm{\mathfrak{m }}
\def\frakn{\mathfrak{n }}
\newcommand{\factor}{0.45}
\newcommand{\muone}{2cm*\factor}
\newcommand{\mutwo}{1cm*\factor}
\newcommand{\muthree}{0cm*\factor}
\newcommand{\nuone}{0cm*\factor}
\newcommand{\nutwo}{1cm*\factor}
\newcommand{\nuthree}{2cm*\factor}
\newcommand{\rOffset}{6cm*\factor}
\newcommand{\xonestart}{4cm*\factor}
\newcommand{\xonemiddle}{2cm*\factor}
\newcommand{\xoneend}{0cm*\factor}
\newcommand{\xtwostart}{0cm*\factor}
\newcommand{\xtwomiddle}{1cm*\factor}
\newcommand{\xtwoend}{2cm*\factor}
\begin{document}
\maketitle

\begin{abstract}
Following recent work by van der Hoeven and Lecerf (ISSAC 2017), we
discuss the complexity of linear mappings, called \emph{untangling} and
\emph{tangling} by those authors, that arise in the context of computations
with univariate polynomials. We give a slightly faster tangling
algorithm and discuss new applications of these techniques. We 
show how to extend these ideas to bivariate settings, and use them to give bounds on the  arithmetic complexity of certain
algebras. 
\end{abstract}

\setcounter{secnumdepth}{4}

\titleformat{\paragraph}[runin]{\normalfont\normalsize\bfseries}{\theparagraph.}{1em}{}
\titlespacing*{\paragraph}{0em}{1ex}{1em}
\newcommand{\pref}[1]{{\bf\ref{#1}}}

\setcounter{secnumdepth}{4}
\renewcommand{\theparagraph}{%
    \ifnum\value{subsection}=0
      \thesection
    \else
      \thesubsection
    \fi
  .\arabic{paragraph}}

\alglanguage{pseudocode}
\renewcommand{\algorithmicrequire}{\textbf{Input:}}
\renewcommand{\algorithmicensure}{\textbf{Output:}}
\algnewcommand{\IIf}[1]{\State\algorithmicif\ #1\ \algorithmicthen}
\algnewcommand{\EndIIf}{\unskip\ \algorithmicend\ \algorithmicif}
\algnewcommand{\IfThenElse}[3]{\State \algorithmicif\ #1\ \algorithmicthen\ #2\ \algorithmicelse\ #3}


\section{Introduction}

In~\cite{HoLe17}, van der Hoeven and Lecerf gave
algorithms for ``modular composition'' modulo powers of polynomials:
that is, computing $F(G) \bmod T^\mu$, for polynomials $F,G,T$ over a
field $\F$ and positive integer $\mu$. As an intermediate result, they
discuss a linear operation and its inverse, which they
respectively call {\em untangling} and {\em tangling}. 

Given separable $T \in \F[x]$ of degree $d$ and a positive integer
$\mu$, polynomials modulo $T^\mu$ can naturally be written in the
power basis $1,x,\dots,x^{d\mu-1}$. Here we consider another representation,
based on bivariate polynomials. Introduce $\K:=\F[y]/\langle
T(y) \rangle$ with $\alpha$ the residue class of y; then, as an $\F$-algebra, $\F[x]/\langle T^\mu\rangle$
is isomorphic to $\K[\xi]/\langle \xi^\mu \rangle$ and untangling and
tangling are the corresponding change of bases that maps $x$ to
$\xi+\alpha$. Take, for instance,
$\F=\Q$, $T=x^2+x+2$ and $\mu=2$. Then $\K=\Q[y]/\langle
y^2+y+2\rangle$;
untangling is the isomorphism $ \Q[x]/\langle x^4+2x^3+5x^2+4x+4
\rangle \to \K[\xi]/\langle \xi^2\rangle$ and tangling is its inverse.

We now assume that $2,\dots,\mu-1$ are units in $\F$. Van der Hoeven
and Lecerf gave algorithms of quasi-linear cost for both untangling
and tangling; their algorithm for tangling is slightly slower than
that for untangling. Our first contribution is an improved algorithm
for tangling, using duality techniques inspired
by~\cite{Shoup94}. This saves logarithmic factors compared to the
results in~\cite{HoLe17}; it may be minor in practice, but we believe
this offers an interesting new point of view. Then we discuss how
these techniques can be of further use,  as in the resolution of
systems of the form $F(x_1,x_2,x_3)=G(x_1,x_2,x_3)=0$, for polynomials
$F,G$ in $\F[x_1,x_2,x_3]$. 

Our second main contribution is an extension of these algorithms to
situations involving more than one variable. As a first step, in this
paper, we deal with certain systems in two variables. 
Indeed, the discussion in~\cite{HoLe17} is closely related to the
question of how to describe isolated solutions of systems of
polynomial equations. This latter question has been the subject of
extensive work in the past; answers vary depending on what information
one is interested in.  

For the sake of this discussion, suppose we consider polynomials
$G_1,\dots,G_s$ in the variables $x_1$ and $x_2$, with coefficients in $\F$.  If
one simply wants to describe set-theoretically the (finitely many)
isolated solutions of $G_1,\dots,G_s$, popular choices include
description by means of univariate
polynomials~\cite{Macaulay16,Canny90,GiHe93,AlBeRoWo94,Rouillier99},
or triangular representations~\cite{Wu86,AuLaMo99}. When all isolated solutions
are non-singular nothing else is needed, but further questions arise
in the presence of multiple solutions as univariate or triangular 
representation may not be able to describe the local algebraic structure
at such roots.

The presence of singular isolated solutions means that the ideal
$\langle G_1,\dots,G_s\rangle$ admits a zero-dimensional primary
component that is not radical. Thus, let $I$ be a
zero-dimensional primary ideal in $\F[x_1,x_2]$ with radical
$\frakm$; we will suppose that $\F[x_1,x_2]/\frakm$ is separable
(which is always the case if $\F$ is perfect, for instance) to
prevent $\frakm$ from acquiring multiple roots over
an algebraic closure $\overline \F$ of $\F$.

A direct approach to describing the solutions of $I$, together with the
algebraic nature of $I$ itself, is to give one of its Gr\"obner
bases. Following~\cite{MaMoMo96}, one may also give a basis of the
dual of $\F[x_1,x_2]/I$, or a standard basis of
$I$. In~\cite[Section~5]{MaMoMo96}, Marinari, M\"oller and Mora make
the following interesting suggestion: build the field
$\K:=\F[y_1,y_2]/\tilde \frakm$, where $\tilde \frakm$ is the ideal
$\frakm$ with variables renamed $y_1,y_2$. Then the polynomials in $I$
vanish at $\alpha:=(\alpha_1,\alpha_2)$ when $\alpha_1,\alpha_2$ are the residue classes of
$y_1,y_2$ in $\K$.  Now extend $I$ to the
polynomial ring $\K[\xi_1,\xi_2]$, for new variables $\xi_1,\xi_2$, by
mapping $(x_1,x_2)$ to $(\xi_1,\xi_2)$. Then, the local structure of
$I$ at $\alpha$ can be described by the primary component of this
extended ideal at $\alpha$.

Let us show the similarities of this idea with van der Hoeven and
Lecerf's approach, on an example from~\cite{NeRaSc17}. We take
$\F=\Q$, $\frakm$ to be the maximal ideal $\langle T_1,T_2\rangle$, with
$T_1:= x_1^2+x_1+2$, $T_2:=x_2-x_1-1$, and $I=\frakm^2$ to be the $\frakm$-primary
ideal with generators
\vspace{-0.4em}
\begin{align*}
G_3 &= x_2^2 - 2x_1x_2 - 2x_2 + x_1^2 + 2x_1 + 1,\\[-1mm]
G_2 &= x_1^2x_2 + x_1x_2 + 2x_2 - x_1^3 - 2x_1^2 - 3x_1 - 2,\\[-1mm]
G_1 &= x_1^4 + 2x_1^3 + 5x_1^2 + 4x_1 + 4.\\[-7mm]
\end{align*}
 Since $T_2$ has degree
one in $x_2$, we can simply take $\K:=\Q[y_1]/\langle
y_1^2+y_1+2\rangle$, $\alpha_1$ to be the residue class of $y_1$ and
$\alpha_2 = \alpha_1 + 1$.

The $(\alpha_1,\alpha_2)$-primary component $J$ of the extension of
$I$ in $\K[\xi_1,\xi_2]$, i.e., the primary component associated to the 
prime ideal $(\xi_1-\alpha_1,\xi_2-\alpha_2)$, is the ideal with lexicographic
Gr\"obner basis
\begin{align*}
 H_3 &=    \xi_2^2 - 2\xi_2\alpha_1 - 2\xi_2 + \alpha_1 - 1,\\[-1mm]
 H_2 &=    \xi_1\xi_2 - \xi_2\alpha_1 - \xi_1\alpha_1 - \xi_1 - 2,\\[-1mm]
 H_1 &=  \xi_1^2 - 2\xi_1\alpha_1 - \alpha_1 - 2.\\[-7mm]
\end{align*}
Its structure appears more clearly after applying the translation
$(\xi_1,\xi_2) \mapsto (\xi_1+\alpha_1,\xi_2+\alpha_2)$: the
translated ideal $J'$ admits the very simple Gr\"obner basis $\langle
\xi_1^2, \xi_1\xi_2,\xi_2^2\rangle$. In other words, this
representation allows one to complement the set-theoretic description of
the solutions by the multiplicity structure.

Our first result in bivariate settings is the relation between the
Gr\"obner bases of $I$ and $J$ (or $J'$): in our example, they both
have three polynomials, and their leading terms are related by the
transformation $(\xi_1,\xi_2) \mapsto (x_1^2,x_2)$. We then prove
that, as in the univariate case, there is an $\F$-algebra isomorphism
$\F[x_1,x_2]/ I \to \K[\xi_1,\xi_2]/ J'$ given by $(x_1,x_2) \mapsto
(\xi_1+\alpha_1,\xi_2 + \alpha_2)$. In our example, this means that
$\Q[x_1,x_2]/ \langle G_1,G_2,G_3\rangle$ is isomorphic to
$\K[\xi_1,\xi_2]/\langle \xi_1^2, \xi_1\xi_2,\xi_2^2\rangle$.

Under certain assumptions on $J'$, we give algorithms for this
isomorphism and its inverse that extend those for univariate
polynomials; while their runtimes are not always quasi-linear, they are
subquadratic in the degree of $I$ (that is, the dimension of
$\F[x_1,x_2]/ I$). We end with a first application: upper bounds
on the cost of arithmetic operations in an algebra such as
$\F[x_1,x_2]/I$; these are new, to the best of our knowledge. Note that with a strong regularity assumption and in a different setting, it has been shown in~\cite{van2018fast} that multiplication in $\F[x1, x2] / I$ can be done in quasi-linear time.

Although our results are still partial (we make assumptions and deal only with bivariate systems), we believe it is
worthwhile to investigate these questions. In future work, we plan to
examine the impact of these techniques on issues arising from
polynomial system solving algorithms: a direction that one may
consider are lifting techniques in the presence of multiplicities, as
in~\cite{HaMoSz16} for instance, as well as the computation of GCDs
modulo ideals such as $I$ above. See, for instance,~\cite{Dahan2017} for
a discussion of the latter question.

\vspace{-0.5em}\section{Preliminaries}\label{ssec:prelim}

In the rest of this paper, $\F$ is a {\em perfect} field. The costs of
all our algorithms are measured in number of operations
$(+,-,\times,\div)$ in $\F$.

\paragraph{}\label{par:M}
We let $\M :\N \to \N$ be  such that product of elements
of degree less than $n$ in $\F[x]$ can be computed in $\M(n)$
operations, and such that $\M$ satisfies the super-linearity
properties of~\cite[Chapter~8]{GaGe13}.
Below, we will freely use all usual consequences of fast
multiplication (on fast GCD, Newton iteration, \dots) and refer the
reader to e.g.~\cite{GaGe13} for details. In particular,
multiplication in an $\F$-algebra of the form $\A:=\F[x]/\langle T(x)
\rangle$ with $T$ monic in $x$, or \sloppy $\A:=\F[x_1,x_2]/\langle
T_1(x_1),T_2(x_1,x_2) \rangle$ with $T_1$ monic in $x_1$ and $T_2$ monic
in $x_2$, can be done in time $O(\M(\delta))$, with $\delta:=\dim_\F
(\A)$.  Inversion, when possible, is slower by a logarithmic factor.
For $\A=\F[x_1,x_2]/I$, for a zero-dimensional monomial ideal $I$,
multiplication and inversion in $\A$ can be done in time
$O(\M(\delta)\log(\delta))$, resp.\ $O(\M(\delta)\log(\delta)^2)$,
with $\delta=\dim_\F (\A)$ (see the appendix).

\paragraph{}
We will use the {\em transposition
  principle}~\cite{CaKaYa89,KaKiBs88}, which is an algorithmic theorem
stating that if the $\F$-linear map encoded by an $n \times m$ matrix
over $\F$ can be computed in time $T$, the transposed map can be
computed in time $T + O(n+m)$. This result has been used in a variety
of contexts; our main sources of inspiration 
are~\cite{Shoup94,BoLeSc03}.

\paragraph{}\label{par:dual}
If $\A$ is an $\F$-vector space, its dual $\A^* := {\rm Hom}_\F(\A,\F)$
is the $\F$-vector space of $\F$-linear mappings $\A \to \F$. When
$\A$ is an $\F$-algebra, $\A^*$ becomes an $\A$-module: to a linear
mapping $\ell: \A \to \F$ and $F \in \A$ we can associate the linear
mapping $F\cdot \ell: G \in \A \mapsto \ell(F G)$. This operation is
called the {\em transposed product} in $\A^*$, since it is the transpose
of the multiplication-by-$F$ mapping. 

Given a basis $\mathcal{B}$ of $\A$, elements of $\A^*$ are
represented on the dual basis, by  their values
on $\mathcal{B}$. In terms of complexity, if $\A$ is an algebra such
as those in~\pref{par:M}, the transposition principle implies that
transposed products can be done in time $O(\M(\delta))$,
resp.\ $O(\M(\delta)\log(\delta))$, with again
$\delta:=\dim_\F(\A)$. See~\cite{Shoup99} for detailed algorithms in
the cases $\A=\F[x]/\langle T(x) \rangle$ and $\A=\F[x_1,x_2]/\langle
T_1(x_1),T_2(x_1,x_2) \rangle$.

An element $\ell\in \A^*$ is called a {\em generator} of $\A^*$ if $\A \cdot \ell
= \A^*$ (in other words, for any $\ell'$ in $\A^*$ there exists $F
\in \A$, which must be unique, such that $F \cdot \ell =
\ell'$). When $\A=\F[x]/\langle T(x) \rangle$, with $n:=\deg(T)$,
$\ell$ defined by $\ell(1)=\cdots=\ell(x^{n-2})=0$ and
$\ell(x^{n-1})=1$ is known to generate $\A^*$. For
$\A=\F[x_1,x_2]/\langle T_1(x_1),T_2(x_1,x_2) \rangle$, $\ell$ given
by $\ell(x_1^{n_1-1}x_2^{n_2-1})=1$, with all other $\ell(x_1^i
x_2^j)=0$, is a generator (here, we write $n_1:=\deg(T_1,x_1)$ and
$n_2:=\deg(T_2,x_2)$). For more general $\A$, $\A^*$ may not be
free: see for example Subsection~\ref{ssec:dualTang2}.


\vspace{-0.5em}\section{The univariate case revisited}\label{sec:univariate}
In this section, we work with univariate polynomials. Suppose that $T \in \F[x]$ is monic and separable (that is, without repeated roots in $ \overline \F$) with degree $d$, and let $\mu$ be an integer positive. We start from the following hypothesis:
\begin{description}
\item[${\bf H}_1$.]   $\F$ has characteristic at least $\mu$.
\end{description}
Define $\K:=\F[y]/T(y)$,
and let $\alpha$ be the residue class of $y$ in $\K$. Van der Hoeven
and Lecerf proved that the $\F$-algebra mapping
\[\begin{array}{cccc}
 \pi_{T,\mu} :& \F[x]/\langle T^\mu \rangle& \to     & \K[\xi]/\langle \xi^\mu \rangle\\
 &              x & \mapsto & \xi+\alpha
\end{array}\]
is well-defined and realizes an isomorphism of $\F$-algebras.
The mapping $\pi_{T,\mu}$ is called {\em untangling}, and its inverse ${\pi_{T,\mu}}^{-1}$ {\em tangling}.
Note that $\pi_{T,\mu}(F)$ simply computes the first $\mu$ terms of the Taylor expansion
of $F$ at $\alpha$, that is, $\pi_{T,\mu}(F) = \sum_{0 \le i < \mu} F^{(i)}(\alpha) \xi^i/i!$.

Reference~\cite{HoLe17} gives algorithms for both
untangling and tangling, the latter calling the former recursively;
the untangling algorithm runs in $O(\M(d\mu)\log(\mu))$ operations
in $\F$, while the tangling algorithm takes $O(\M(d\mu)\log(\mu)^2 +
\M(d) \log(d))$ operations. Using transposition techniques 
from~\cite{Shoup94}, we prove the following.
\begin{proposition}\label{prop:1}
  Given $G$ in $\K[\xi]/\langle \xi^\mu \rangle$, one can compute
  ${\pi_{T,\mu}}^{-1}(G)$ in $O(\M(d\mu)\log(\mu) + \M(d) \log(d))$ operations in $\F$.
\end{proposition}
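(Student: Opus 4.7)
The plan is to invert the untangling map by a Shoup-style duality argument, mirroring the inverse modular composition algorithm of~\cite{Shoup94}. Write $\A := \F[x]/\langle T^\mu\rangle$ and $\B := \K[\xi]/\langle \xi^\mu\rangle$, so $\pi := \pi_{T,\mu}:\A\to\B$ is an $\F$-algebra isomorphism, computable as an $\F$-linear map in $O(\M(d\mu)\log\mu)$ operations by the untangling algorithm of~\cite{HoLe17}; by the transposition principle its $\F$-linear transpose $\pi^{*}:\B^{*}\to\A^{*}$ is also computable within this bound. Pick the standard generator $\ell_\B\in\B^{*}$ dual to the top monomial $y^{d-1}\xi^{\mu-1}$ (as described in~\pref{par:dual}), and set $\ell_\A := \pi^{*}(\ell_\B) = \ell_\B\circ\pi$; because $\pi$ is an algebra iso, $\ell_\A$ is a generator of $\A^{*}$. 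A direct computation using $\pi(FH)=\pi(F)\pi(H)$ shows that if $F:=\pi^{-1}(G)$, then in $\A^{*}$
\[
F\cdot\ell_\A \;=\; \pi^{*}\!\bigl(G\cdot\ell_\B\bigr).
\]

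The algorithm is then, given $G$: compute the transposed product $G\cdot\ell_\B\in\B^{*}$ in $O(\M(d\mu))$ using the multiplicative structure of $\B$ recalled in~\pref{par:M}--\pref{par:dual}; apply $\pi^{*}$ in $O(\M(d\mu)\log\mu)$; then recover $F$ from the equation $F\cdot\ell_\A=\lambda$, where $\lambda$ denotes the resulting known right-hand side. The last inversion is the only nontrivial step, since $\ell_\A$ is not a ``canonical'' generator of $\A^{*}$. To handle it I will change generator: let $\ell_\A^{\mathrm{std}}$ be the standard generator of $\A^{*}$ (functional dual to $x^{d\mu-1}$). The map $H\mapsto H\cdot\ell_\A^{\mathrm{std}}$ is a structured (Hankel-type) bijection from $\A$ to $\A^{*}$ admitting an inversion in $O(\M(d\mu))$ operations, as in~\cite{Shoup99}. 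Precompute once the unique $c\in\A$ with $c\cdot\ell_\A^{\mathrm{std}}=\ell_\A$ by applying this inversion to $\ell_\A$; since $\ell_\A$ is a generator, $c$ is a unit of $\A$. Then $F\cdot\ell_\A=\lambda$ is equivalent to $(cF)\cdot\ell_\A^{\mathrm{std}}=\lambda$, which we solve for $cF$ in $O(\M(d\mu))$, and we recover $F=c^{-1}\cdot(cF)$.

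The final subtlety -- and the step I expect to bind the complexity -- is inverting $c$ in $\A$. A generic inversion in $\A$ would cost $O(\M(d\mu)\log(d\mu))$, introducing an unwanted $\log d$ factor on the leading term. Instead, reduce $c$ modulo $T$ to get an element of the field $\K=\F[y]/\langle T\rangle$, invert it there in $O(\M(d)\log d)$, and Hensel-lift the inverse successively modulo $T^{2},T^{4},\dots,T^{\mu}$; each doubling step costs $O(\M(d\cdot 2^{k}))$, and summing via the super-linearity of $\M$ gives a total lifting cost of $O(\M(d\mu))$. Collecting all contributions, the precomputation of $\ell_\A$, $c$, and $c^{-1}$ costs $O(\M(d\mu)\log\mu+\M(d)\log d)$ and each subsequent call adds only $O(\M(d\mu)\log\mu)$, matching the claim. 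The main obstacles are two technical checks rather than the overall design: verifying that $c$ is a unit of $\A$ (so that the change of generator and Hensel lifting are legitimate), which reduces to $\ell_\A$ being a generator of $\A^{*}$; and justifying that the inversion of $H\mapsto H\cdot\ell_\A^{\mathrm{std}}$ actually runs in $O(\M(d\mu))$, a standard but not entirely obvious consequence of the Hankel structure of its matrix representation.
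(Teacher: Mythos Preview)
Your approach is correct and is essentially the paper's own proof: both transport the equation to $\A^*$ via the transpose of untangling, obtain a relation $F\cdot L=L'$ with $L=\pi^\perp(\ell_\B)$ a generator, and solve it by reducing to a single inversion in $\A$, performed by inverting modulo $T$ and Hensel-lifting to $T^\mu$. Your ``change of generator'' element $c$ is exactly the polynomial $\Lambda$ the paper produces through its Hankel operator $\zeta$ (indeed $\zeta(\ell_\A^{\mathrm{std}})=1$, so inverting $H\mapsto H\cdot\ell_\A^{\mathrm{std}}$ is just applying $\zeta$), so the two write-ups differ only in phrasing.
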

\noindent The $\F$-algebra $\K$ admits the basis
$(1,\dots,\alpha^{d-1})$;  $\F[x]/\langle T^\mu\rangle$ has
 basis $\mathcal{B}=(1,x,\dots,x^{d \mu -1})$ and
$\K[\xi]/\langle \xi^\mu\rangle$ admits the bivariate basis
$\mathcal{C}=(1,\dots,\alpha^{d-1},\xi,\dots,\alpha^{d-1}\xi,\dots\xi^{\mu-1},\dots,\alpha^{d-1}\xi^{\mu-1}).$
As per~\pref{par:dual}, we represent a linear form
$L \in \F[x]/\langle T^\mu\rangle^*$ by the vector 
$[L(x^i) \mid 0 \le i < d \mu]$ $\in$ $\F^{d \mu}$,
and a linear form $\ell \in \K[\xi]/\langle \xi^\mu\rangle^*$ 
by the bidimensional vector
$[\ell(\alpha^i \xi^j) \ \mid \  0 \le i < d,\ 0 \le j <  \mu] \in \F^{d \times \mu}$.


\vspace{-0.9em}
\subsection{A faster tangling algorithm}

This section shows that using the transpose of untangling allows us to
deduce an algorithm for tangling; see~\cite{Shoup94,DeSc12} for a similar use of
transposition techniques. We start by describing useful subroutines.

\paragraph{} \label{par:inverse} The first algorithmic result we will
need concerns the cost of inversion in $\F[x]/\langle T^\mu
\rangle$. To compute $1/F \bmod T^\mu$ for some $F \in \F[x]$ of
degree less than $d\mu$ we may start by computing $\bar G := 1/ \bar F
\bmod T$, with $\bar F := F \bmod T$; this costs $O(\M(d\mu) +
\M(d)\log (d))$ operations in $\F$. Then we lift $\bar G$ to $G:=1/F
\bmod T^\mu$ by Newton iteration modulo the powers of $T$, at the cost
of another $O(\M(d \mu))$.

\paragraph{} \label{par:hank}
Next, we discuss the solution of certain Hankel systems.
Consider $L$ and $L'$, two $\F$-linear forms $\F[x]/\langle T^\mu
\rangle \to \F$; our goal is to find $F$ in $\F[x]/\langle T^\mu
\rangle$ such that $F\cdot L = L'$, under the assumption that $L$ generates 
the dual space $\F[x]/\langle T^\mu
\rangle^*$.
 In matrix terms, this is equivalent to finding
coefficients $f_0,\dots,f_{d\mu-1}$ of $F$ such that $[H][f_0,\dots,f_{d\mu-1}]^T = [B]$ with $H_{i,j} = L(x^{i+j})$ and $B_{i} = L'(x^{i})$, $0\leq i < d\mu$.
The system can be solved in \sloppy $O(\M(d \mu) \log (d\mu))$
operations in $\F$~\cite{BrGuYu80}, but we will derive an improvement
from the fact that $T^\mu$ is a $\mu$th power.

An algorithm that realizes the transposed product $(L,F) \mapsto
L'$ is in~\cite[Lemma~2.5]{BoJeMoSc17}: let $\zeta: \F^{d\mu} \to
\F^{d\mu}$ be the upper triangular Hankel operator with first column
the coefficients of degree $1,\dots,d\mu$ of $T^\mu$, and let
$\Lambda$ and $\Lambda'$ be the two polynomials in $\F[x]$ with respective
coefficients $\zeta(L)$ and $\zeta(L')$. Then $\Lambda' = F
\,\Lambda \bmod T^\mu$.

Given the values of $L$ and $L'$ at $1,\dots,x^{d\mu-1}$, we 
compute $\zeta(L)$ and $\zeta(L')$ in $O(\M(d\mu))$ operations. Since
$L$ generates $\F[x]/\langle T^\mu \rangle^*$, $\Lambda$ is invertible
modulo $T^\mu$; then, using \pref{par:inverse}, we compute its
inverse in $O(\M(d\mu) + \M(d)\log (d))$ operations. Multiplication by
$\Lambda'$ takes another $O(\M(d\mu))$ operations, for a total of
$O(\M(d\mu) + \M(d)\log (d))$. 

\paragraph{}\label{par1} 
We now recall van der Hoeven and Lecerf's algorithm for the mapping
$\pi_{T,\mu}$, and deduce an algorithm for its transpose, with the same
asymptotic runtime. Van der Hoeven and Lecerf's algorithm is
recursive, with a divide-and-conquer structure; the key idea is that
the coefficients of $\pi_{T,\mu}(F)$, for $F$ in $\F[x]/\langle T^\mu\rangle$,
are the values of $F,F',\dots,F^{(\mu-1)}$ at $\alpha$, divided
respectively by $0!,1!,\dots,(\mu-1)!$. 
\vspace{-0.7em}
\begin{algorithm}[H]
  \caption{$\pi_{\rm rec}(F, T, \mu)$}
  \begin{algorithmic}[1]
  \Require $F \in \F[x]/\langle T^\mu\rangle$
  \Ensure $[F(\alpha),\dots,F^{(\mu-1)}(\alpha)] \in \K^\mu$
    \IfThenElse {$\mu=1$}{\Return $[\,F(\alpha)\,]$}{set $\lambda := \lfloor \frac{\mu}{2} \rfloor$}
    \State  \Return $\pi_{\rm rec}(F \bmod T^\lambda, T, \lambda)$ cat $\pi_{\rm rec}(F^{(\lambda)} \bmod T^{\mu-\lambda},T,\mu-\lambda)$
  \end{algorithmic}
\end{algorithm}

\vspace{-2.2em}

\begin{algorithm}[H]
  \caption{$\pi(F, T, \mu)$}
  \begin{algorithmic}[1]
  \Require $F \in \F[x]/\langle T^\mu\rangle$ 
  \Ensure $\pi_{T,\mu}(F) \in \K[\xi]/\langle \xi^\mu \rangle$
  \State \Return $\sum_{0 \le i < \mu} \frac{v[i]}{i!} \xi^i$, with  $v:=\pi_{\rm rec}(F, T, \mu)$
  \end{algorithmic}
\end{algorithm}
The runtime $\mathcal{T}(d,\mu)$ of $\pi_{\rm rec}$ satisfies $\mathcal{T}(d,\mu) \le
\mathcal{T}(d,\mu/2)+ O(\M(d\mu))$, so this results in an algorithm for $\pi_{T,\mu}$
that takes $O(\M(d\mu)\log(\mu))$ operations.  Since $\pi_{T,\mu}$ is an
$\F$-linear mapping $\F[x]/\langle T^\mu\rangle \to \K[\xi]/\langle
\xi^\mu\rangle$, its transpose ${\pi_{T,\mu}}^\perp$ is an $\F$-linear mapping
$\K[\xi]/\langle \xi^\mu\rangle^* \to \F[x]/\langle
T^\mu\rangle^*$. The transposition principle implies that ${\pi_{T,\mu}}^\perp$ can
be computed in $O(\M(d\mu)\log(\mu))$ operations; we make the
corresponding algorithm explicit as follows.

We transpose all steps of the algorithm above, in reverse order.  As
input we take $\ell \in \K[\xi]/\langle \xi^\mu \rangle^*$, which we
see as a bidimensional vector in $\F^{d \times \mu}$; we also
write $\ell=[\ell_i \mid 0 \le i < \mu]$, with all $\ell_i$ in $\F^d$.
The transpose of the concatenation at the last step allows one
to apply the two recursive calls to the first and second halves of
input $\ell$. Each of them is followed by an application of the
transpose of Euclidean division (see below), and after
``transpose differentiating'' the second intermediate result (see
below), we return their sum.

\vspace{-1em}

\begin{algorithm}[H]
  \caption{$\pi_{\rm rec}^\perp (\ell, T, \mu)$}
    \begin{algorithmic}[1]
  \Require $\ell\in \F^{d \times \mu}$
    \IfThenElse{$\mu=1$}{\Return $\ell_0$}{$\lambda := \lfloor \frac{\mu}{2} \rfloor$}
    \State $v_0 := \pi_{\rm rec}^\perp([\ell_i \ \mid \ 0 \le i <\lambda],T,\lambda)$
    and $u_0 := \bmod^\perp(v_0, T^\lambda, d \mu)$
    \State $v_1 := \pi_{\rm rec}^\perp([\ell_i \ \mid \ \lambda \le i < \mu],T,\mu-\lambda)$
    \State $u_1 := \text{diff}^\perp( \bmod^\perp(v_1, T^{\mu-\lambda}, d \mu-\lambda)) ,\lambda)$
    \State \Return $u_0+u_1$
    \end{algorithmic}
\end{algorithm}
\vspace{-4.4em}
\begin{algorithm}[H]
  \caption{$\pi^\perp (\ell, T, \mu)$}
    \begin{algorithmic}[1]
  \Require $\ell \in \K[\xi]/\langle \xi^\mu \rangle^* \simeq \F^{d \times \mu}$
    \Ensure ${\pi_{T,\mu}}^\perp(\ell) \in \F[x]/\langle T^\mu\rangle^* \simeq \F^{d \mu}$ 
\State \Return $\pi_{\rm rec}^\perp ([\ell_i/i! \ \mid \ 0 \le i < \mu], T, \mu)$
    \end{algorithmic}
\end{algorithm} 
\vspace{-1.5em}
\noindent Correctness follows from the correctness of van der Hoeven and
Lecerf's algorithm. Following~\cite{BoLeSc03}, given a vector $u$, a polynomial $S \in \F[x]$ and
an integer $t \ge \deg(S)$, where $u$ has length $\deg(S)$,
$\bmod^\perp(u, S, t)$ returns the first $t$ terms of the sequence
defined by initial conditions $u$ and minimal polynomial $S$ in time $O(\M(t))$.
Given a vector $u$ of
length $t-\lambda$, $v:=\text{diff}^\perp(u,\lambda)$ is the vector of
length $t$ given by $v_0=\cdots=v_{\lambda-1}=0$ and
$v_i =i\cdots (i-\lambda+1) u_{i-\lambda}$ for $i=\lambda,\dots,t-1$.
It can be computed in linear time $O(t)$.
Overall, as in~\cite{HoLe17}, the runtime is $O(\M(d
\mu) \log(\mu))$.

\paragraph{} We can now give our algorithm for the tangling operator ${\pi_{T,\mu}}^{-1}$;
it is inspired by a similar result due to
Shoup~\cite{Shoup94}.

Take $G$ in $\K[\xi]/\langle \xi^\mu \rangle$: we want to find $F \in
\F[x]/\langle T^\mu\rangle$ such that ${\pi_{T,\mu}}(F) = G$. Let $\ell:
\K[\xi]/\langle \xi^\mu \rangle \to \F$ be defined by
$\ell(\alpha^{d-1}\xi^{\mu-1})=1$ and $\ell(\alpha^i\xi^j)=0$ for all
other values of $i <d,j<\mu$; as pointed out in~\pref{par:dual}, this
is a generator of $\K[\xi]/\langle \xi^\mu \rangle^*$.  Define further
$\ell':=G \cdot \ell$. Then $\ell'$ is a transposed product as
in~\pref{par:dual}, and we saw that it can be computed in
$O(\M(d\mu))$ operations. This  implies ${\pi_{T,\mu}}(F)\cdot \ell
= \ell'$.

Let now $L:={\pi_{T,\mu}}^\perp(\ell)$ and $L':={\pi_{T,\mu}}^\perp(\ell')$; we obtain them by
applying our transpose untangling algorithm to $\ell$, resp.\,
$\ell'$, in time $O(\M(d\mu)\log(\mu) + \M(d)\log (d))$.  Since $\ell$
is a generator of $\K[\xi]/\langle \xi^\mu \rangle^*$, $L$ is a
generator of $\F[x]/\langle T^\mu \rangle^*$. The equation
${\pi_{T,\mu}}(F)\cdot \ell = \ell'$ then implies that $F \cdot L = L'$, which
is an instance of the problem discussed in~\pref{par:hank};
applying the algorithm there takes another $O(\M(d\mu) +
\M(d)\log (d))$. Summing all costs, this gives an algorithm for
${\pi_{T,\mu}}^{-1}$ with cost $O(\M(d\mu)\log(\mu) + \M(d)\log (d))$,
proving Proposition~\ref{prop:1}.


\subsection{Applications}

\paragraph{}
For $P$ in $\F[x]$ one can compute $x^D \bmod
P$ using $O(\log(D))$ multiplications modulo $P$ by repeated
squaring. 
Applications include Fiduccia's algorithm for the computation of terms
in linearly recurrent sequences~\cite{Fiduccia85} or 
of high powers of matrices
\cite{ranum1911general,Giesbrecht1995}. This algorithm takes
$O(\M(n) \log(D))$ operations in $\F$, with $n:=\deg(P)$. We
assume without loss of generality that $D \ge n$.

We can do better, in cases where $P$ is not
squarefree. For computations of terms in recurrent sequences, such
$P$'s appear when computing terms of \textit{bivariate} recurrent
sequences $(a_{i,j})$ defined by $ \sum_{i,j} a_{i,j} x^i y^j=
N(x,y)/Q(x,y)$, for some polynomials $N,Q \in \F[x,y]$ with $Q(0,0)
\ne 0$. Then, the $j$-th row $\sum_{i} a_{i,j} x^i$ has
characteristic polynomial $P^j$, where $P$ is the reverse polynomial
of $Q(x,0)$~\cite{BoCaChDu}.

First, assume that $P=T^\mu$ with $T$ separable of degree $d$. Then
we compute $x^D \bmod P$ by tangling $r := (\xi + \alpha)^D$. The quantity $r = \sum_{i=0}^{\mu-1}
\binom{D}{i} \xi^i \alpha^{D-i}$ can be computed in time
$O(\M(d)(\log(D) + \mu ))$, by computing
$\alpha^{D-\mu+1},\alpha^{D-\mu+2},\dots,\alpha^D$ and multiplying
them by the binomial coefficients (which themselves are obtained by
using the recurrence they satisfy).
By Proposition~\ref{prop:1}, the cost of tangling is $O( \M(d\mu)
\log(\mu) + \M(d)\log(d))$, which brings the total  to
$O(\M(d)\log(D) + \M(d\mu) \log(\mu))$, since $d \le D$.  To
compute $x^D$ modulo an arbitrary $P$, one may compute the squarefree
decomposition of $P$, apply the previous algorithm modulo each factor
and obtain the result by applying the Chinese Remainder Theorem. The
overall runtime becomes $O(\M(m)\log(D) + \M(n) \log(n))$, where $n$
and $m$ are the degrees of $P$ and its squarefree part, respectively;
this is to be compared with the cost $O(\M(n) \log(D))$ of repeated
squaring.
While this  algorithm improves over the
direct approach, practical gains show up only for astronomical values of
the parameters.

\paragraph{}
Assume $\F=\Q$. In~\cite{LeMeSc13}, Lebreton, Mehrabi and Schost gave
an algorithm to compute the intersection of surfaces in 3d-space, that
is, to solve polynomial systems of the form $F(x_1,x_2,x_3) =
G(x_1,x_2,x_3) = 0$. Assuming that the ideal $K:=\langle F, G \rangle
\subset \Q(x_1)[x_2,x_3]$ is radical and that we are in generic
coordinates, the output is polynomials $S, T, U$ in $\Q[x_1,x_2]$ such
that $K$ is equal to $\langle S, U x_3-T \rangle$ (so $S$ describes
the projection of the common zeros of $F$ and $G$ on the
$x_1,x_2$-plane, and $T$ and $U$ allow us to recover $x_3$).  The
algorithm of~\cite{LeMeSc13} is Monte Carlo, with runtime $O(D^{4.7})$
where $D$ is an upper bound on  $\deg(F)$ and $\deg(G)$. The output
has $\Theta(D^4)$ terms in the worst case, and the result
in~\cite{LeMeSc13} is the best to date.

The case of non-radical systems was discussed in~\cite{MeSc16}. It was
pointed out in the introduction of that paper that quasi-linear time
algorithms for untangling and tangling (which were not explicitly
called by these names) would make it possible to extend the results
of~\cite{LeMeSc13} to general systems. Hence, already with the results 
by van der Hoeven and Lecerf a runtime $O(D^{4.7})$ was made possible
for the problem of surface intersection, without a radicality assumption.


\section{The bivariate case}

We now  generalize the previous questions to the bivariate
setting. We expect several of these ideas to carry over to higher
numbers of variables, but some adaptations may be non-trivial (for
instance, we rely on Lazard's structure theorem on lexicographic
bivariate Gr\"obner bases). As an application, we give results on the
complexity of arithmetic modulo certain primary ideals.


\subsection{Setup}

\paragraph{}\label{par:set}
For the rest of the paper, the {\em degree} $\deg(I)$ of a
zero-dimensional ideal $I$ in $\F[x_1,x_2]$ is defined as the
dimension of $\F[x_1,x_2]/I$ as a vector space (the same definition
will hold for polynomials over any field).

Let $\frakm$ be a maximal ideal of degree $d$ in $\F[x_1,x_2]$; we
consider two new variables $y_1,y_2$, we let $\gamma: \F[x_1,x_2] \to
\F[y_1,y_2]$ be the $\K$-algebra isomorphism mapping $(x_1,x_2)$ to
$(y_1,y_2)$ and let $\tilde \frakm:=\gamma(\frakm)$. This is a maximal
ideal as well, and $\K:=\F[y_1,y_2]/\tilde \frakm$ is a field
extension of degree $d$ of $\F$. We then let $\alpha_1,\alpha_2$ be the
respective residue classes of $y_1,y_2$ in~$\K$.

Next, let $J \subset \K[\xi_1,\xi_2]$, for two new variables
$\xi_1,\xi_2$, be a zero-dimensional primary ideal at
$\alpha:=(\alpha_1,\alpha_2)$. Finally, let $I:=\Phi^{-1}(J)$, where
$\Phi$ is the natural embedding $\F[x_1,x_2] \to \K[\xi_1,\xi_2]$
given by $(x_1,x_2) \mapsto (\xi_1,\xi_2)$. One easily checks that $I$
is $\frakm$-primary (that is, $\frakm$ is the radical of $I$), and
that $J$ is the primary component at $\alpha$ of the ideal $I\cdot
\K[\xi_1,\xi_2]$ generated by $\Phi(I)$. Note that since $\F$ is
perfect, $\F \to \K$ is separable, so
over an algebraic closure $\overline \F$ of $\F$, $\frakm$
has $d$ distinct solutions.
We make the following assumption:

\begin{description}
\item[${\bf H}_2$.] $\F$ has characteristic at least $n$, with $n:=\deg(I)$.
\end{description}

\noindent Finally, we let $J' \subset \K[\xi_1,\xi_2]$ be the ideal obtained by
applying the translation $(\xi_1,\xi_2) \mapsto (\xi_1 +
\alpha_1, \xi_2 + \alpha_2)$ to $J$; it is primary at $(0,0)$.

\paragraph{} \label{par:mI}
Although our construction starts from the datum of $\frakm$
and $J \subset \K[\xi_1,\xi_2]$ and defines $I$ from them, we may also
take as starting points $\frakm$ and an $\frakm$-primary ideal $I
\subset \F[x_1,x_2]$ (this is what we did for the example in the
introduction).

Under that point of view, consider the ideal $I\cdot \K[\xi_1,\xi_2]$
generated by $\Phi(I)$, for $\Phi: \F[x_1,x_2] \to \K[\xi_1,\xi_2]$ as
above, and let $J$ be the primary component of $I\cdot
\K[\xi_1,\xi_2]$ at $\alpha$. One verifies that $I$ is equal to
$\Phi^{-1}(J)$, so we are indeed in the same situation as
in~\pref{par:set}.

\paragraph{} \label{par:defGB}
For the rest of the paper, we use the lexicographic monomial
ordering in $\F[x_1,x_2]$ induced by $x_1 < x_2$, and its analogue in
$\K[\xi_1,\xi_2]$; ``the'' Gr\"obner basis of an ideal is its minimal
reduced Gr\"obner basis for this order. Our first goal in this section is then to
describe the relation between the Gr\"obner bases of $I$ and $J$:
viz., they have the same number of polynomials, and their leading
terms are related in a simple fashion (as seen on the example above).

Let $T$ be the Gr\"obner basis of $\frakm$. Since $\frakm$ is maximal,
$T$ consists of two polynomials $(T_1,T_2)$, with $T_1$ of degree $d_1$
in $\F[x_1]$ and $T_2$ in $\F[x_1,x_2]$, monic of degree $d_2$ in
$x_2$. Note that $d_1d_2 = d = \deg(\frakm)$.
Next, let $H = (H_1,\dots,H_t)$ be the Gr\"obner basis of $J$, with
$H_1 < \cdots < H_t$; we let
$\xi_1^{\mu_1}\xi_2^{\nu_1},\dots,\xi_1^{\mu_t}\xi_2^{\nu_t}$ be the
respective leading terms of $H_1,\dots,H_t$. Thus, the $\mu_i$'s are
decreasing, the $\nu_i$'s are increasing, and $\nu_1=\mu_t=0$.
Finally, we let $\mu:=\deg(J) = \deg(J')$. Remark that the Gr\"obner
basis of $J'$ admits the same leading terms as $H$.

In our example, we have $t=3$, $(\mu_1,\nu_1)=(2,0)$,
$(\mu_2,\nu_2)=(1,1)$ and $(\mu_3,\nu_3)=(0,2)$. The integers $d_1,d_2$
are respectively $2$ and $1$, so $d=2$, the degree $n$ is $6$ and the
multiplicity $\mu$ is $3$. The key result in this subsection is
the following.
\begin{proposition}\label{prop:2}
  The Gr\"obner basis of $I$ has the form $(R_1,\dots,R_t)$, where for
  $j=1,\dots,t$, $R_j = {T_1}^{\mu_j} \tilde R_j$, for some polynomial
  $\tilde R_j \in \F[x_1,x_2]$ monic of degree $d_2 \nu_j$ in $x_2$.
 In particular, $n = d \mu$.
\end{proposition}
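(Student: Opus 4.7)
The plan is to apply Lazard's structure theorem for bivariate lex Gr\"obner bases to get the general shape of the basis of $I$, identify the univariate factors as powers of $T_1$, and then match the exponents with those of $J$ via a dimension count together with a leading-term analysis.

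Applying Lazard's theorem to $I$ yields the shape $(R_1,\dots,R_s)$ with $R_j=h_j(x_1)g_j(x_1,x_2)$, $g_j$ monic of degree $e_j$ in $x_2$, $0=e_1<e_2<\cdots<e_s$, and $h_s\mid\cdots\mid h_1$ in $\F[x_1]$. Since $R_1=h_1$ generates $I\cap\F[x_1]$, whose radical equals $\frakm\cap\F[x_1]=(T_1)$ with $T_1$ irreducible (because $\F[x_1]/(T_1)\hookrightarrow\K$ is a degree-$d_1$ field extension), each $h_j=T_1^{m_j}$ with $0\le m_s\le\cdots\le m_1$. To pin down $m_1=\mu_1$, note that $T_1^m\in I$ iff $T_1(\xi_1)^m\in J\cap\K[\xi_1]$; the latter is $(\xi_1-\alpha_1)$-primary with monic generator $H_1$, hence equals $((\xi_1-\alpha_1)^{\mu_1})$. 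Factoring $T_1(\xi_1)=(\xi_1-\alpha_1)Q(\xi_1)$ with $Q(\alpha_1)\ne 0$ gives $m\ge\mu_1$, so minimality of the reduced basis forces $m_1=\mu_1$.

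The identity $n=d\mu$ follows by base change to $\bar\F$: by CRT, $\bar\F\otimes_\F\F[x_1,x_2]/I$ splits as a product indexed by the $d$ Galois-conjugate $\bar\F$-roots of $\frakm$; each local factor has $\bar\F$-multiplicity $\mu$ via flat base change $\K\to\bar\F$ applied to the $\K$-primary ideal $J$, and Galois symmetry distributes this across all $d$ components. To match the remaining exponents, I use that the image of $R_j$ under $x_i\mapsto\xi_i+\alpha_i$, namely $T_1(\xi_1+\alpha_1)^{m_j}g_j(\xi_1+\alpha_1,\xi_2+\alpha_2)\in J'$, has lex leading monomial $\xi_1^{d_1m_j}\xi_2^{e_j}$, which must lie in the initial ideal $\langle\xi_1^{\mu_k}\xi_2^{\nu_k}:k\le t\rangle$ of $J'$. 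Combining these constraints with Lazard's monotonicity (in particular $m_1=\mu_1$ and $m_s=0=\mu_t$ from strict decrease of corners) and the staircase formula $n=\sum_{j=1}^{s-1}d_1m_j(e_{j+1}-e_j)=d_1d_2\sum_{k=1}^{t-1}\mu_k(\nu_{k+1}-\nu_k)=d\mu$ forces $s=t$, $m_j=\mu_j$, and $e_j=d_2\nu_j$; the ``in particular'' claim then drops out.

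I expect the final matching step to be the main obstacle: the leading-monomial inclusions give only one-sided inequalities, so turning them into exact equalities alongside the dimension identity calls for sharper structural input, likely either from finer divisibility relations in Lazard's theorem or by directly lifting each Gr\"obner basis element of $J'$ to an element $R_j\in I$ via the local isomorphism $\F[x_1,x_2]/I\simeq\K[\xi_1,\xi_2]/J'$ expected to emerge in the subsequent parts of the paper.
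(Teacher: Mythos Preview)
Your opening moves are sound: applying Lazard's theorem directly to $I$ gives the factorization $R_j=T_1^{m_j}\tilde R_j$, the computation $m_1=\mu_1$ via $I\cap\F[x_1]=\Phi^{-1}(J\cap\K[\xi_1])$ is correct, and your base-change argument for $n=d\mu$ is essentially what the paper does over its splitting field $\L$. The gap you flag in the matching step is, however, genuine and not merely cosmetic. The constraints you list---each leading monomial $\xi_1^{d_1 m_j}\xi_2^{e_j}$ lying in $\text{in}(J')$, together with $m_1=\mu_1$, $m_s=0$, and the area identity---do \emph{not} determine the staircase. For a concrete failure, take $d_1=1$, $d_2=2$, and $J'$ with corners $(\mu_1,\nu_1)=(2,0)$, $(\mu_2,\nu_2)=(0,1)$, so $\mu=2$ and $n=4$. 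The intended $I$-staircase has corners $(2,0),(0,2)$, but the alternative staircase with corners $(2,0),(1,1),(0,3)$ also satisfies every one of your constraints: each corner lies in $\text{in}(J')=\langle\xi_1^2,\xi_2\rangle$, the area is $2\cdot1+1\cdot2=4$, and $m_1=2=\mu_1$. So one-sided leading-term inclusions plus the dimension count cannot close the argument; the slack appears precisely when $d_2>1$. Your proposed fallback via the isomorphism of the next proposition would not obviously help either: even granting the ring isomorphism (whose proof in the paper uses the dimension equality you have independently established, so circularity is avoidable), an $\F$-algebra isomorphism does not by itself control the lex initial ideal.

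The paper circumvents the matching problem entirely by a constructive argument over a splitting field $\L$ of a primitive element for $\K/\F$. There $I\cdot\L[\xi_1,\xi_2]$ factors as the intersection of $d$ Galois-conjugate primary ideals $K_i$, each with Gr\"obner basis $\sigma_i(H_1),\dots,\sigma_i(H_t)$. Applying Lazard locally gives $H_{i,j}=(\xi_1-\zeta_\kappa)^{\mu_j}\tilde H_{i,j}$ with $\tilde H_{i,j}$ monic of degree $\nu_j$ in $\xi_2$. Grouping the $d$ roots by their $d_1$ distinct $\xi_1$-coordinates $\zeta_1,\dots,\zeta_{d_1}$ (each fiber of size $d_2$), the paper multiplies the $\tilde H_{i,j}$ within each fiber and then CRT-lifts across the fibers in the variable $\xi_1$ to build explicit polynomials $G_j=T_1^{\mu_j}\tilde G_j$ with $\tilde G_j$ monic of degree $d_2\nu_j$ in $\xi_2$. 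A localization argument shows $\langle G_1,\dots,G_t\rangle$ has exactly the primary components $K_1,\dots,K_d$, hence equals $I\cdot\L[\xi_1,\xi_2]$; a monomial count then certifies that $(G_1,\dots,G_t)$ is already a (minimal, non-reduced) Gr\"obner basis, which pins down the leading terms as $\xi_1^{d_1\mu_j}\xi_2^{d_2\nu_j}$ directly. Reducing to the minimal reduced basis preserves the $T_1^{\mu_j}$ factor. The key difference is that the paper \emph{builds} the basis with the right exponents rather than trying to deduce them from inclusion constraints; the CRT-gluing across the $\xi_1$-fibers is the step your approach is missing.
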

As a result, for all $j$ the leading term of $R_j$ is ${x_1}^{d_1
  \mu_j} {x_2}^{d_2 \nu_j}$, whereas that of $H_j$ is ${\xi_1}^{\mu_j}
{\xi_2}^{\nu_j}$, as in our example. The next two
sub-sections are devoted to the proof of this proposition.

\paragraph{}  We define here a family
of polynomials $G_1,\dots,G_t$, and prove that they form a
(non-reduced) Gr\"obner basis of $I$ in~\pref{par:GB}.

Because the extension $\F \to \K$ is separable, it admits a primitive
element $\beta$, with minimal polynomial $F \in \F[t]$; this
polynomial has degree $[\K:\F]=d$.  Let $\L$ be a splitting field
for $F$ containing $\K$ and let $I\cdot \L[\xi_1,\xi_2]$ and $K$ be
the extensions of $I\cdot\K[\xi_1,\xi_2]$ and $J$ in
$\L[\xi_1,\xi_2]$, respectively. Then $\deg(J) = \deg(K)$, and $K$ is the
primary component of $I\cdot \L[\xi_1,\xi_2]$ at $\alpha$.

Let $\beta_1=\beta,\beta_2,\dots,\beta_d$ be the roots of $F$ in $\L$.
For all $i=1,\dots,d$, we let $\sigma_i$ be an element in the Galois
group of $\L/\F$ such that $\beta_i=\sigma_i(\beta)$, as well as
$\alpha^{(i)}:=(\sigma_i(\alpha_1),\sigma_i(\alpha_2))$. Note that
these elements are pairwise distinct: since $\beta$ is in
$\F[\alpha_1,\alpha_2]$ and all $\sigma_i$'s fix $\F$,
$\alpha^{(i)}=\alpha^{(j)}$ implies $\beta_i=\beta_j$, and thus
$i=j$. Therefore, $\alpha^{(1)},\dots, \alpha^{(d)}$ can be seen as all the
roots of $\frakm$, with $\alpha^{(1)}=\alpha$.

For $i =1,\dots,d$, let $K_i$ be the primary component of $I \cdot
\L[\xi_1,\xi_2]$ at $\alpha^{(i)}$, so that $K_1 = K$. By
construction, these ideals are pairwise coprime, and their product is
$I \cdot \L[\xi_1,\xi_2]$.  Take $i$ in $1,\dots,d$, and let $D$ be a
large enough integer such that $K = I \cdot \L[\xi_1,\xi_2] +
\frakn^D$ and $K_i = I \cdot \L[\xi_1,\xi_2] + \frakn_i^D$, with
$\frakn$ and $\frakn_i$ the maximal ideals at $\alpha$ and
$\alpha^{(i)}$ respectively. Since $I \cdot \L[\xi_1,\xi_2]$ is defined over $\F$,
$\sigma_i$ thus maps the generators of $K$ to those of $K_i$. This
implies that the Gr\"obner basis of $K_i$ is
$(H_{i,1},\dots,H_{i,t})$, with $H_{i,j} := \sigma_i(H_{j})$ for all
$j \le t$.

By definition of the integers $d_1,d_2$, we can partition the roots
$\{\alpha^{(1)},\dots,\alpha^{(d)}\}$ of $\frakm$ according to their
first coordinate, into $d_1$ classes $C_1,\dots,C_{d_1}$ of cardinality $d_2$
each: for $\kappa \le d_1$, all $\alpha^{(i)}$ in $C_\kappa$ have the
same first coordinate, say $\zeta_\kappa$, and the $\zeta_\kappa$'s
are pairwise distinct. Remark that $\zeta_1,\dots,\zeta_{d_1}$ are the
roots of $T_1$.
  
Fix $\kappa \le d_1$ and take $i$ such that $\alpha^{(i)}$ is in
$C_\kappa$. Because $K_i$ is primary at $\alpha$, Lazard's structure
theorem on bivariate lexicographic Gr\"obner bases~\cite{Lazard85}
implies that for $j=1,\dots,t$, $H_{i,j} =
(\xi_1-\zeta_\kappa)^{\mu_j} \tilde H_{i,j}$, for some polynomial
$\tilde H_{i,j} \in \L[\xi_1,\xi_2]$, monic of degree $\nu_j$ in
$\xi_2$, and of degree less than $\mu_1-\mu_j$ in $\xi_1$.

For $1 \le \kappa \le d_1$ and $1\le j \le t$, let us then define
$\tilde G_{\kappa,j} := \prod_i \tilde H_{i,j}$, where the product is
taken over all $i$ such that $\alpha^{(i)} \in C_\kappa$. This is a
polynomial in $\L[\xi_1,\xi_2]$, with leading term ${\xi_2}^{d_2
  \nu_j}$. Finally, let $\tilde G_1 := 1$, and for $2 \le j \le t$ let
$\tilde G_j$ be the unique polynomial in $\L[\xi_1,\xi_2]$ of degree
less than $d_1(\mu_{1}-\mu_j)$ in $\xi_1$ such that $\tilde G_j \bmod
(\xi_1-\zeta_\kappa)^{\mu_{1}-\mu_j} =\tilde G_{\kappa,j}$ holds for all
$\kappa \le d_1$. We claim that $(G_1,\dots,G_t)$, with $G_j :=
{T_1}^{\mu_j} \tilde G_j$ for all $j$, is a Gr\"obner basis of $I
\cdot \L[\xi_1,\xi_2]$, minimal but not necessarily reduced.
 
\paragraph{}\label{par:GB} To establish this claim, we first prove that $I \cdot \L[\xi_1,\xi_2]=\langle G_1,\dots,G_t \rangle$ in
$\L[\xi_1,\xi_2]$. The first step is to determine the common zeros of
$G_1,\dots,G_t$.  Since $G_1= {T_1}^{\mu_1}$, the $\xi_1$-coordinates
of the solutions are the roots $\{\zeta_1,\dots,\zeta_{d_1}\}$ of
$T_1$. Fix $\kappa \le d_1$, and let $(\zeta_\kappa,\eta)$ be a root
of $G_1,\dots,G_t$. In particular, $G_t(\zeta_\kappa, \eta)=\tilde
G_t(\zeta_\kappa, \eta)=0$. This implies that $\tilde
G_{\kappa,t}(\zeta_\kappa, \eta)=0$, so there exists $i \le d$ such
that $(\zeta_\kappa,\eta)=\alpha^{(i)}$. Conversely, any
$\alpha^{(i)}$ cancels $G_1,\dots,G_t$, so that the zero-sets of
$G_1,\dots,G_t$ and $I \cdot \L[\xi_1,\xi_2]$ are equal.  Next, we
determine the primary component $Q_i$ of $\langle G_1,\dots,G_t
\rangle$ at a given $\alpha^{(i)}$.

Take such an index $i$, and assume that $\alpha^{(i)}$ is in
$C_\kappa$, for some $\kappa\le d_1$ (so the first coordinate of
$\alpha^{(i)}$ is $\zeta_\kappa$). Take $D$ large enough, so that
$D\ge \mu_1$ and $(\xi_1-\zeta_\kappa)^D$ belongs to $Q_i$; hence
$Q_i$ is also the primary component of the ideal $\langle
G_1,\dots,G_t, (\xi_1-\zeta_\kappa)^D\rangle$ at $\alpha^{(i)}$. This
ideal is generated by the polynomials $(\xi_1-\zeta_\kappa)^{\mu_1}$
and $(\xi_1-\zeta_\kappa)^{\mu_j} \tilde G_j$, for $2\le j \le t$.
For such $j$, since $\tilde G_j \bmod
(\xi_1-\zeta_\kappa)^{\mu_1-\mu_j} = \tilde G_{\kappa,j}$, we get
that $(\xi_1-\zeta_\kappa)^{\mu_j} \tilde G_j \bmod
(\xi_1-\zeta_\kappa)^{\mu_1} = (\xi_1-\zeta_\kappa)^{\mu_j}\tilde
G_{\kappa,j}$. As a result, the ideal above also admits the generators
$(\xi_1-\zeta_\kappa)^{\mu_1},(\xi_1-\zeta_\kappa)^{\mu_2} \tilde
G_{\kappa,2},\dots,\tilde G_{\kappa,t}$. Now, recall that $\tilde
G_{\kappa,j}= \prod_{\iota} \tilde H_{\iota,j}$, where the product is
taken over all $\iota$ such that $\alpha^{(\iota)}$ is in
$C_\kappa$. For $\iota \ne i$, $\tilde H_{\iota,j}$ does not vanish at
$\alpha^{(i)}$~\cite[Th.~2.(i)]{Lazard85}, so it is invertible locally
at $\alpha^{(i)}$. It follows that the primary component of $G$ at
$\alpha^{(i)}$ is generated by
$(\xi_1-\zeta_\kappa)^{\mu_1},(\xi_1-\zeta_\kappa)^{\mu_2} \tilde
H_{i,2},\dots,\tilde H_{i,t}$, that is, $H_{i,1},\dots,H_{i,t}$.  This
is precisely the ideal $K_i$.

To summarize, $\langle G_1,\dots,G_t\rangle$ and $I \cdot
\L[\xi_1,\xi_2]$ have the same primary components $K_1,\dots,K_d$, so
these ideals coincide. It remains to prove that $(G_1,\dots,G_t)$ is a
Gr\"obner basis of $I \cdot \L[\xi_1,\xi_2]$. The shape of the leading
terms of $G_1,\dots,G_t$ implies that number of monomials reduced with
respect to these polynomials is $d\deg(J) =d \mu$. Now, since all its
primary components $K_i$ have degree $\mu=\deg(J)$, the ideal $I \cdot
\L[\xi_1,\xi_2]=\langle G_1,\dots,G_t \rangle$ has degree $d \mu$ as
well.  As a result, $G_1,\dots,G_t$ form a Gr\"obner basis (since
otherwise, applying the Buchberger algorithm to them would yield fewer
reduced monomials, a contradiction).

The polynomials $G_1,\dots,G_t$ are a Gr\"obner basis, {\em minimal},
as can be seen from their leading terms, but not reduced; we
let $R_1,\dots,R_t$ be the corresponding reduced minimal Gr\"obner
basis. For all $j$, ${T_1}^{\mu_j}$ divides $G_j$, and we obtain $R_j$
by reducing $G_j$ by multiples of ${T_1}^{\mu_j}$, so that each $R_j$
is a multiple of ${T_1}^{\mu_j}$ as well. In addition, the leading
terms of $G_j$ and $R_j$ are the same. Hence, our proposition is
proved.

\paragraph{} As a corollary, the following proposition and its proof
extend~\cite[Lemma~9]{HoLe17} to bivariate contexts. We will still use
the names {\em untangling} and {\em tangling} for $\pi_{\frakm,J'}$ as
defined below and its inverse.
\begin{proposition}\label{prop:pi2}
  Assume $\frakm$ is a maximal ideal in $\F[x_1,x_2]$ and $I$ is an
  $\frakm$-primary zero-dimensional ideal in $\F[x_1,x_2]$, with $\F$
  perfect of characteristic at least $\deg(I)$. 

Let $\tilde \frakm$ be the
  image of $\frakm$ through the isomorphism $\F[x_1,x_2] \simeq
  \F[y_1,y_2]$, let $\alpha_1,\alpha_2$ be the residue classes of
  $y_1,y_2$ in $\K:=\F[y_1,y_2]/\tilde\frakm$ and let $J$ be the
  primary component of $I \cdot \K[\xi_1,\xi_2]$ at
  $(\alpha_1,\alpha_2)$. Finally, let $J'$ be the image of $J$ through
  $(\xi_1,\xi_2) \mapsto (\xi_1+\alpha_1,\xi_2+\alpha_2)$.  Then,
  there exists an $\F$-algebra isomorphism
  \begin{align}
    \pi_{\frakm,J'}: \F[x_1,x_2]/I \to \K[\xi_1,\xi_2]/J'
  \end{align}
  given by $(x_1,x_2) \mapsto (\xi_1 + \alpha_1, \xi_2 + \alpha_2)$
\end{proposition}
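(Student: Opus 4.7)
The plan is to factor the map $\pi_{\frakm,J'}$ as the composite of the untranslated embedding $\Phi$ from~\pref{par:set} and the coordinate translation, and then reduce the isomorphism claim to two already-available facts: $I=\Phi^{-1}(J)$, and the dimension identity $n=d\mu$ from Proposition~\ref{prop:2}.

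Concretely, I would first define the $\F$-algebra homomorphism $\Psi: \F[x_1,x_2]\to\K[\xi_1,\xi_2]$ sending $x_i\mapsto \xi_i+\alpha_i$, and observe that $\Psi=\tau\circ\Phi$, where $\tau$ is the $\K$-algebra automorphism of $\K[\xi_1,\xi_2]$ given by $\xi_i\mapsto \xi_i+\alpha_i$. By the very definition of $J'$ in~\pref{par:set}, we have $\tau(J)=J'$, and $\tau$ is invertible with inverse $\xi_i\mapsto\xi_i-\alpha_i$. Well-definedness on the quotient is then immediate: for $F\in I$, the inclusion $\Phi(I)\subseteq I\cdot\K[\xi_1,\xi_2]\subseteq J$ gives $\Psi(F)=\tau(\Phi(F))\in \tau(J)=J'$, so $\Psi$ descends to an $\F$-algebra homomorphism $\pi_{\frakm,J'}:\F[x_1,x_2]/I\to \K[\xi_1,\xi_2]/J'$.

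For injectivity, suppose $F\in\ker(\pi_{\frakm,J'})$, so $\Psi(F)\in J'$. Applying $\tau^{-1}$ yields $\Phi(F)\in\tau^{-1}(J')=J$, and the identity $I=\Phi^{-1}(J)$ from~\pref{par:set} then forces $F\in I$. For surjectivity I would argue by dimension count: the source has $\F$-dimension $\deg(I)=n$, while the target has $\K$-dimension $\deg(J')=\mu$, hence $\F$-dimension $d\mu$. Proposition~\ref{prop:2} gives $n=d\mu$, so the two $\F$-vector spaces have equal finite dimension and the injective $\F$-linear map $\pi_{\frakm,J'}$ is an isomorphism.

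The only substantive ingredient is Proposition~\ref{prop:2} (already established), together with $I=\Phi^{-1}(J)$ (recorded in~\pref{par:set} and~\pref{par:mI}); the rest is the bookkeeping of identifying $\pi_{\frakm,J'}$ with $\tau\circ\Phi$ modulo $J'$. I do not anticipate a genuine obstacle, because no new structural result about primary ideals in two variables is needed beyond what Proposition~\ref{prop:2} has already supplied.
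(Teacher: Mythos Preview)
Your proposal is correct and follows essentially the same approach as the paper: factor through the untranslated embedding $\Phi$, use $\Phi(I)\subseteq J$ for well-definedness, $I=\Phi^{-1}(J)$ for injectivity, and the dimension identity $n=d\mu$ from Proposition~\ref{prop:2} for surjectivity, with the translation $\tau$ providing the passage from $J$ to $J'$. The only cosmetic difference is that the paper first establishes the isomorphism $\F[x_1,x_2]/I\to\K[\xi_1,\xi_2]/J$ and then composes with $\tau$, whereas you work directly with $\Psi=\tau\circ\Phi$.
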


\begin{proof}
  We prove that the embedding $\Phi:\F[x_1,x_2] \to \K[\xi_1,\xi_2]$
  given by $(x_1,x_2) \mapsto (\xi_1,\xi_2)$ induces an isomorphism of
  $\F$-algebras $\F[x_1,x_2]/I \to \K[\xi_1,\xi_2]/J$. From
  this, applying the change of variables $(\xi_1,\xi_2) \mapsto (\xi_1 +
  \alpha_1, \xi_2 + \alpha_2)$ gives the result.

  Since $\Phi(I)$ is contained in $J$, the embedding $\Phi$ induces an
  homomorphism $ \phi: \F[x_1,x_2]/I \to \K[\xi_1,\xi_2]/J.$ By the
  previous proposition, both sides have dimension $d \mu$ over
  $\F$, so it is enough to prove that $\phi$ is injective. But this
  amounts to verifying that $\Phi^{-1}(J) = I$, which is true by definition.
\end{proof}


\subsection{Untangling for monomial ideals}

\paragraph{}
In this section, we give an algorithm for the mapping
$\pi_{\frakm,J'}$ of Proposition~\ref{prop:pi2} under a simplifying
assumption. To state it, recall that $J'$ is maximal at $(0,0) \in
\K^2$.  Then, our assumption is

\begin{description}
\item[${\bf H}_3$.] $J'$ is a {\em monomial} ideal. 
\end{description}
In view of the shape of the leading terms given in~\pref{par:defGB}
for the ideal $J$, we deduce that $J'=\langle
\xi_1^{\mu_1},\xi_1^{\mu_2} \xi_2^{\nu_2}, \dots,\xi_2^{\nu_t}
\rangle$.  In the rest of this subsection, $\mathcal{B}$ is the
monomial basis of $\F[x_1,x_2]/I$ induced by the Gr\"obner basis
exhibited in Proposition~\ref{prop:2} and $\mathcal{B}'$ is the
monomial basis of $\K[\xi_1,\xi_2]/J'$. Then, the inputs of the
algorithms in this subsection are in ${\rm Span}_\F
\mathcal{B}:=\oplus_{b\in \mathcal{B}} \F b$, and the outputs in ${\rm
  Span}_\K \mathcal{B}':=\oplus_{b'\in \mathcal{B}'} \K b'$. This
being said, our result is the following.

\begin{proposition}\label{prop:pi}
  Under  ${\bf H}_2$ and ${\bf H}_3$, given $F$ in
  $\F[x_1,x_2]/I$ one can compute $\pi_{\frakm,J'}(F)$ using either $O(\M(d n))$
  or $O(\M(\mu n) \log(\mu))$ operations in $\F$, and 
  in particular in $O(\M(n^{1.5}) \log(n))$ operations.
\end{proposition}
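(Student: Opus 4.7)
The plan is to realize $\pi_{\frakm,J'}$ by two successive univariate untanglings, one in each variable, and then to truncate modulo the monomial ideal $J'$; the two announced bounds will arise from two different ways of organizing this computation.

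By Proposition~\ref{prop:2}, $T_1^{\mu_1} = R_1$ belongs to $I$, so every $F \in \F[x_1,x_2]/I$ can be written $F = \sum_b F_b(x_1)\, x_2^b$ with $F_b \in \F[x_1]/\langle T_1^{\mu_1}\rangle$ and the $x_2$-exponents $b$ bounded by the staircase of $I$. The first stage applies the univariate untangling $\pi_{T_1,\mu_1}$ of Proposition~\ref{prop:1} to each $F_b$, replacing $x_1$ by $\xi_1+\alpha_1$, where $\alpha_1$ is the class of $y_1$ in $\K_1 := \F[y_1]/\langle T_1\rangle \subseteq \K$; the output lies in $\K_1[\xi_1,x_2]$, modulo $\xi_1^{\mu_1}$ and a suitable truncation in $x_2$. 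The second stage substitutes $x_2 \mapsto \xi_2+\alpha_2$ and reduces modulo $J'$; since $\alpha_2$ satisfies an irreducible polynomial of degree $d_2$ over $\K_1$ derived from $T_2$, this is itself a univariate untangling over $\K_1$, applied column by column in $\xi_1$ and truncated at the $\xi_2$-height prescribed by the staircase of $J'$.

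To obtain the $O(\M(dn))$ bound, one bundles the two shifts as bivariate Taylor expansions of total size $O(n)$ over an extension of dimension $O(d)$ over $\F$, evaluated by fast multiplications of size $O(dn)$ (for instance via Kronecker substitution). To obtain $O(\M(\mu n)\log\mu)$, one instead invokes Proposition~\ref{prop:1} in its recursive divide-and-conquer form in the variable whose staircase has length of order $\mu$, paying only the logarithmic overhead from that recursion. The combined $O(\M(n^{1.5})\log n)$ bound follows from the elementary observation that for any factorization $d\mu=n$, $\min(\M(dn),\M(\mu n)\log\mu)$ is largest at $d = \mu = \sqrt{n}$, where both expressions are $O(\M(n^{1.5})\log n)$.

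The main obstacle is the precise cost accounting: the staircase of $J'$ causes different $\xi_1$-columns to have different admissible $\xi_2$-heights, so per-column costs are non-uniform, and one must verify that the sums match both aggregate bounds while correctly charging the arithmetic in the intermediate rings $\K_1$ and $\K$ (the latter itself a bivariate quotient of $\F$-dimension $d$, for which the appendix's monomial-ideal bounds must be invoked).
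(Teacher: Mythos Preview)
Your overall strategy—two successive univariate substitutions $x_1\mapsto\xi_1+\alpha_1$ then $x_2\mapsto\xi_2+\alpha_2$, followed by truncation modulo $J'$—matches the paper, as does the deduction of $O(\M(n^{1.5})\log n)$ from the minimum of the two bounds. However, the cost accounting for the two individual bounds is imprecise in ways that matter.

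For the $O(\M(\mu n)\log\mu)$ bound, your phrase ``in the variable whose staircase has length of order $\mu$'' is incorrect: neither $\mu_1$ nor $\nu_t$ need be of order $\mu$ (for the rectangular ideal $\langle\xi_1^k,\xi_2^k\rangle$ both equal $\sqrt\mu$). The paper applies the divide-and-conquer untangling in \emph{both} variables—first $\pi_{T_1,\mu_1}$ row by row over $\F$, then $\pi_{T_2,\nu_t}$ column by column over $\K'=\F[y_1]/\langle T_1\rangle$—and the bound comes from the elementary inequality $\mu_1\nu_t\le\mu^2$ (since each of $\mu_1,\nu_t$ is at most $\mu$) combined with $d_1d_2\mu=n$. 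You never state this inequality, and without it the per-row and per-column costs do not aggregate to the claimed bound.

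For the $O(\M(dn))$ bound, the paper uses a different algorithm altogether: it computes $F(\xi_1+\alpha_1,\xi_2+\alpha_2)$ directly over $\K$ via the classical fast Taylor shift of~\cite{AhStUl75}, applied coefficient-wise in each variable in turn, with no recursion and no intermediate field $\K'$. Your sentence about ``bundling the two shifts as bivariate Taylor expansions'' is in the right spirit but does not make this distinction explicit; in particular, the two-stage $\F\to\K'\to\K$ pipeline of your first paragraph is \emph{not} how this first bound is obtained.

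Finally, your worry about non-uniform column heights and invoking the appendix is unnecessary: the paper sidesteps the issue entirely by working modulo the rectangular ideal $\langle\xi_1^{\mu_1},\xi_2^{\nu_t}\rangle$ throughout and discarding monomials outside $\mathcal{B}'$ only at the very end. No monomial-ideal arithmetic from the appendix enters this proposition.
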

We prove the first two bounds in~\pref{par:algo1}
and~\pref{par:algo2} respectively. The last statement readily follows, since 
$n =d \mu$ (Proposition~\ref{prop:2}).

\paragraph{} \label{par:algo1}We start with an efficient algorithm for those cases 
where $d=[\K:\F]$ is small. The idea is simple: as in the
univariate case, the untangling mapping $\pi_{\frakm,J'}$ can be rephrased in
terms of Taylor expansion. Explicitly, for $F$ in $\F[x_1,x_2]/I$,
$\pi_{\frakm,J'}(F)$ is simply 
$$F(\xi_1+\alpha_1,\xi_2+\alpha_2) \bmod \langle
\xi_1^{\mu_1},\xi_1^{\mu_2} \xi_2^{\nu_2}, \dots,\xi_2^{\nu_t}
\rangle.$$ We compute $F(\xi_1+\alpha_1,\xi_2+\alpha_2)$, proceeding
one variable at a time.

\smallskip\noindent{{\em Step 1.}~~} Compute
$F^*:=F(\xi_1+\alpha_1,\xi_2) \in \K[\xi_1,\xi_2]$.  Because
$2,\dots,n$ are units in $\F$, given a univariate polynomial $P$ of
degree $t \le n$ in $\K[\xi_1]$ one can compute $P(\xi_1+\alpha_1)$
in $O(\M(t))$ operations $(+,\times)$ in $\K$ (see~\cite{AhStUl75}).
Using Kronecker substitution~\cite[Chapter~8.4]{GaGe13}, this
translates to $O(\M(d t))$ operations in $\F$ (we will systematically
use such techniques, see e.g. Lemma~2.2 in~\cite{GaSh92} for details).
Computing $F^*$ is done by
applying this procedure coefficient-wise with respect to $\xi_2$; in
particular, all $\xi_1$-degrees involved are at most $n$, and add up
to $n$.  The super-linearity of $\M$ implies that this takes a total
of $O(\M(d n))$ operations in $\F$.

\smallskip\noindent{{\em Step 2.}~~} Compute
$F^*(\xi_1,\xi_2+\alpha_2)=F(\xi_1+\alpha_1,\xi_2+\alpha_2)$.  This is
done in the same manner, applying the translation with respect to
$\xi_2$ instead; the runtime is still $O(\M(d n))$ operations in $\F$.

\smallskip\noindent{{\em Step 3.}~~} Since $F$ is in ${\rm Span}_\F
\mathcal{B}$, and $\mathcal{B}$ is stable by division, 
$F(\xi_1+\alpha_1, \xi_2+\alpha_2)$ are in ${\rm Span}_\K
\mathcal{B}:=\oplus_{b\in \mathcal{B}} \K b$. By
Proposition~\ref{prop:2}, all monomials in $\mathcal{B}'$ are in
$\mathcal{B}$, so we can obtain $\pi_{\frakm,J'}(F)$ by discarding
from $F(\xi_1+\alpha_1,\xi_2+\alpha_2)$ all monomials not in
$\mathcal{B}'$.

Overall, the runtime is $O(\M(d n))$ operations in $\F$. For small $d$,
when the multiplicity $\mu$ is large, this is close to being
linear in $n=\deg(I)$.

\paragraph{} \label{par:algo2} Next we give an another solution, which will perform well in
cases where the multiplicity $\mu = \deg(J')$ is small.

Again the idea is simple: given $F$ in ${\rm Span}_\F \mathcal{B}$,
compute $F(\xi_1+\alpha_1,\xi_2+\alpha_2) \bmod \langle
{\xi_1}^{\mu_1}, {\xi_2}^{\nu_t}\rangle$, and again discard unwanted
terms (this is correct, since all coefficients of $\pi_{\frakm,J'}(F)$
are among those we compute). As in the previous paragraph, this is
done one variable at a time; in the following, recall that $\frakm =
\langle T_1(x_1), T_2(x_1,x_2) \rangle$, with $\deg(T_1,x_1)=d_1$ and
$\deg(T_2,x_2)=d_2$, so that $d_1d_2 = d= \deg(\frakm)$.  Also, we let
$\K'$ be the subfield $\F[y_1]/\langle T_1(y_1) \rangle$ of $\K$, so
that $\K=\K'[y_2]/\langle T_2(\alpha_1,y_2)\rangle$; we have
$[\K':\F]=d_1$ and $[\K:\K']=d_2$.

\smallskip\noindent{{\em Step 1.}~~} By Proposition~\ref{prop:2}, we
can write $F = \sum_{0 \le i < d_2 \nu_t} F_i(x_1) x_2^i$, with all
$F_i$'s of degree at most $d_1 \mu_1$. Compute all
$F_i^*:=\pi_{T_1,\mu_1}(F_i) \in\K'[\xi_1]/\langle
{\xi_1}^{\mu_1}\rangle$, so as to obtain $G:=\sum_{0 \le i < d_2
  \nu_t} F_i^* x_2^i$. The cost of this step is $O(d_2 \nu_t \M(d_1
\mu_1)\log(\mu_1))$ operations in $\F$. Since $\nu_t \mu_1 \le
   {\mu}^2$ and $d_1 d_2 \mu =d \mu= n$, with $n = \deg(I)$, this is
   $O(\M(\mu n) \log(\mu))$.

\smallskip\noindent{{\em Step 2.}~~} Rewrite $G$ as $G=\sum_{i <
  \mu_1} G_i(x_2) \xi_1^i$, with all $G_i$'s in $\K'[x_2]$ of degree
at most $d_2 \nu_t$. Compute all $G_i^*:=\pi_{T_2,\nu_t}(G_i) \in
\K[\xi_2]/\langle {\xi_2}^{\nu_t} \rangle$.

To compute the $G_i^*$'s, we apply the univariate untangling algorithm
with coefficients in $\K'$ instead of $\F$.
The runtime of this second step is $O(\mu_1 \M(d_2 \nu_t)\log(\nu_t))$
operations $(+,\times)$ in $\K'$, which becomes $O(\mu_1 \M(d_1 d_2
\nu_t)\log(\nu_t))$ operations in $\F$, once we use Kronecker
substitution to do arithmetic in $\K'$. As for the first step, this is
$O(\M(\mu n) \log(\mu))$ operations in $\F$.

\smallskip\noindent{{\em Step 3.}~~} 
At this stage, we have $\sum_{i < d_2 \nu_t} G_i^* {\xi_1}^i \in
\K[\xi_2]/\langle {\xi_1}^{\mu_1}, {\xi_2}^{\nu_t} \rangle =
F(\xi_1+\alpha_1,\xi_2+\alpha_2) \bmod \langle {\xi_1}^{\mu_1},
{\xi_2}^{\nu_t} \rangle$. Discard all monomials lying in $J'$ and
return the result -- this involves no arithmetic operation.
On our example, the untangling
algorithm would pass from an ideal in $x_1,x_2$ (figure (a)  below) to
the monomial ideal $\langle \xi_1^2,\xi_2^2\rangle$ ({step~2},
figure (b)  below) then the monomial $\xi_1 \xi_2$ would be discarded to
get a result defined modulo $J'=\langle
\xi_1^2,\xi_1\xi_2,\xi_2^2\rangle$ ({step~3}, figure (c)  below).
\vspace{1em}

\begin{tikzpicture}
  \fontfamily{times}{\fontsize{0.2cm}{0.2cm}\selectfont

      \coordinate [label={above right:$x_1^4$}] (A1) at (\xonestart, \xtwostart);
      \coordinate [] (B1) at (\xonestart, \xtwomiddle);
      \coordinate [label={above right:$x_1^2x_2$}] (C1) at (\xonemiddle, \xtwomiddle);
      \coordinate [] (D1) at (\xonemiddle, \xtwoend);
      \coordinate [label={above right:$x_2^2$}] (E1) at (\xoneend, \xtwoend);
      \coordinate [label={below left: (a)}] (O1) at (0, 0);
      \coordinate [] (F1) at (\xoneend, \xtwostart);
      \draw [very thick] (A1) -- (B1) -- (C1) -- (D1) -- (E1) -- (O1) -- (A1);
      
      \coordinate [label={above:$(\mu_3,\nu_3) = (0,2)$}] (A2) at (\muthree+\rOffset, \nuthree);
      \coordinate [] (B2) at (\mutwo+\rOffset, \nuthree);
      \coordinate [] (C2) at (\mutwo+\rOffset, \nutwo);
      \coordinate [] (D2) at (\muone+\rOffset, \nutwo);
      \coordinate [label={below:$(\mu_1,\nu_1) = (2,0)$}] (E2) at (\muone+\rOffset, \nuone);
      \coordinate [label={below left: (b)}] (O2) at (0+\rOffset, 0);
      \coordinate [] (F2) at (\muone+\rOffset, \nuthree);
      \draw [very thick] (A2) -- (B2) -- (F2) -- (D2) -- (E2) -- (O2) -- (A2);
      
      \draw [dashed] (C2) -- node [above left]  {$ $} (B2)
                         -- node [below left]  {$1$} (C2)
                         -- node [below ] {$1$} (D2)
                         -- node [above right] {$ $} (F2);

      \coordinate [label={above:$(\mu_3,\nu_3) = (0,2)$}] (A3) at (\muthree+2*\rOffset, \nuthree);
      \coordinate [] (B3) at (\mutwo+2*\rOffset, \nuthree);
      \coordinate [label={above right:$(\mu_2,\nu_2) = (1,1)$}] (C3) at (\mutwo+2*\rOffset, \nutwo);
      \coordinate [] (D3) at (\muone+2*\rOffset, \nutwo);
      \coordinate [label={below:$(\mu_1,\nu_1) = (2,0)$}] (E3) at (\muone+2*\rOffset, \nuone);
      \coordinate [label={below left: (c)}] (O3) at (0+2*\rOffset, 0);
    
      \draw [very thick] (A3) -- (B3) -- (C3) -- (D3) -- (E3) -- (O3) -- (A3);
  }
\end{tikzpicture}


\vspace{-1em} \subsection{Recursive tangling for monomial ideals}

The ideas used to perform univariate tangling, that is, to invert
$\pi_{T,\mu}$, carry over to bivariate situations. In this section, we
discuss the first of them, namely, a bivariate version of van der
Hoeven and Lecerf's recursive algorithm.  We still work under
assumption ${\bf H}_3$ that $J'$ is a monomial ideal.  As before,
$\mathcal{B}$ is the monomial basis of $\F[x_1,x_2]/I$ induced by the
Gr\"obner basis exhibited in Proposition~\ref{prop:2}.

\begin{proposition}\label{prop:piI}
  Under ${\bf H}_2$ and ${\bf H}_3$, given $G$ in
  $\K[\xi_1,\xi_2]/J'$ one can compute ${\pi_{\frakm,J'}}^{-1}(G)$
  using either $O(\M(d n) \log(n) + \M(n) \log(n)^2)$, or $O(\M(\mu
  n)\log(n)^2)$ operations in $\F$.  In particular, this can be done
  in $O(\M(n^{1.5}) \log(n)^2)$ operations.
\end{proposition}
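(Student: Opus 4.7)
My plan is to invert the two-step untangling of Proposition~\ref{prop:pi}: given $G \in \K[\xi_1,\xi_2]/J'$, apply univariate tangling first in $\xi_2$ and then in $\xi_1$, and finally reduce modulo $I$ via the Gr\"obner basis of Proposition~\ref{prop:2}. The preliminary step is to extend $G$ to $\tilde G \in \K[\xi_1,\xi_2]/\langle\xi_1^{\mu_1},\xi_2^{\nu_t}\rangle$ by zero-padding the coefficients of monomials in $J'\setminus\langle\xi_1^{\mu_1},\xi_2^{\nu_t}\rangle$; this is valid because $\langle\xi_1^{\mu_1},\xi_2^{\nu_t}\rangle \subseteq J'$ by the shape of the leading terms from Proposition~\ref{prop:2}.

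Explicitly, writing $\tilde G = \sum_{i<\mu_1}\tilde G_i(\xi_2)\,\xi_1^i$, I apply the univariate tangling ${\pi_{T_2(\alpha_1,\cdot),\nu_t}}^{-1}$ to each $\tilde G_i$; this works over the base field $\K':=\F[y_1]/\langle T_1\rangle$ (with target $\K = \K'[y_2]/\langle T_2(\alpha_1,y_2)\rangle$), using the separable polynomial $T_2(\alpha_1,y_2)$ of degree $d_2$ and multiplicity $\nu_t$, and yields $H_i(x_2)\in \K'[x_2]/\langle T_2(\alpha_1,x_2)^{\nu_t}\rangle$. Regrouping $\sum_i H_i(x_2)\,\xi_1^i$ as $\sum_{j<d_2\nu_t} H'_j(\xi_1)\,x_2^j$, I apply ${\pi_{T_1,\mu_1}}^{-1}$ to each $H'_j$ to obtain $F_j(x_1)\in \F[x_1]/\langle T_1^{\mu_1}\rangle$; assembling yields $F'\in \F[x_1,x_2]$ of bidegree less than $(d_1\mu_1,\, d_2\nu_t)$, which I reduce modulo $I$. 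Correctness is immediate, since this is step-by-step the inverse of Proposition~\ref{prop:pi}.

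The two bounds correspond to two choices of univariate tangling. For the second bound $O(\M(\mu n)\log(n)^2)$, plugging the van der Hoeven-Lecerf recursive tangling (cost $O(\M(d\mu)\log(\mu)^2 + \M(d)\log(d))$) into each step and invoking super-linearity of $\M$ together with $\mu_1\nu_t\le\mu^2$ and $d\mu=n$ (and Kronecker substitution to move between $\K'$ and $\F$) yields $O(\M(\mu n)\log(n)^2+\M(n)\log(n))$ per step. The first bound $O(\M(dn)\log(n) + \M(n)\log(n)^2)$ follows from a small-$d$ variant analogous to \pref{par:algo1}: the shift operations contribute $O(\M(dn))$ via Aho-Steiglitz-Ullman inverse shifts, and the inversion machinery of Proposition~\ref{prop:1} contributes the remaining $\log(n)$ and $\M(n)\log(n)^2$ terms. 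The main obstacle will be to verify that the final reduction of $F'$ modulo $I$ fits within the claimed budget: this should go through using the structured form $R_j = T_1^{\mu_j}\tilde R_j$ of the Gr\"obner basis, so that the reduction can be performed column-by-column in $x_2$, each column being a reduction of a univariate polynomial in $x_1$ modulo a power of $T_1$ in quasi-linear time.
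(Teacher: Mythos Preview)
Your overall strategy---zero-pad $G$ to $\K[\xi_1,\xi_2]/\langle\xi_1^{\mu_1},\xi_2^{\nu_t}\rangle$, invert the two univariate untanglings, then reduce the resulting $F'$ modulo $I$---is sound as far as \emph{correctness} goes: since $F'-F\in I$, the Gr\"obner normal form of $F'$ is indeed $F$. But the proof has a genuine gap in the reduction step, and this is precisely the step the paper's recursive algorithm is designed to circumvent.

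Your claim that the reduction ``can be performed column-by-column in $x_2$, each column being a reduction of a univariate polynomial in $x_1$ modulo a power of $T_1$'' is incorrect. Reducing the coefficient of $x_2^j$ modulo $T_1^{\mu_{k(j)}}$ amounts to subtracting a multiple of $T_1^{\mu_{k(j)}}x_2^{j}$ from $F'$, but $T_1^{\mu_k}x_2^{j}$ is \emph{not} in $I$ in general: only $R_k=T_1^{\mu_k}\tilde R_k$ is, and $\tilde R_k$ has lower-order terms in $x_2$ that propagate downward. Concretely, in the paper's running example ($T_1=x_1^2+x_1+2$, $T_2=x_2-x_1-1$, $I=\frakm^2$), take $G=\alpha_1\xi_2$. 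Your procedure yields $F'=2+\tfrac{1}{7}(2+12x_1+3x_1^2+2x_1^3)x_2$; column-by-column reduction gives $\hat F=2+x_1x_2$, but $\pi_{\frakm,J'}(\hat F)=\alpha_2\xi_1+\alpha_1\xi_2\ne G$. The true Gr\"obner reduction of $F'$ is $F=\tfrac{1}{7}(8-x_1-2x_1^2+x_1^3)+x_1x_2$, and the difference arises exactly because reducing by $G_2$ introduces new terms in the $x_2^0$ row. A correct Gr\"obner reduction of a polynomial with up to $d\mu_1\nu_t\le\mu n$ monomials modulo $I$ is not obviously feasible in the stated time; you would need a nontrivial structured algorithm, and you have not supplied one.

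The paper takes a different route that sidesteps this reduction entirely: a recursion on $\mu_1$ in the spirit of van der Hoeven--Lecerf. One halves $\mu_1$, recursively tangles $G\bmod\xi_1^{\bar\mu}$ with respect to $J'_0=J'+\langle\xi_1^{\bar\mu}\rangle$, applies the \emph{forward} map $\pi_{\frakm,J'}$ to the partial answer to isolate the high half, corrects by a suitable power of $W=\xi_1/\pi_{\frakm,J'}(T_1)$, recursively tangles with respect to the colon ideal $J'_1=J':\xi_1^{\bar\mu}$, and combines as $\bar F+T_1^{\bar\mu}\bar E$. The two stated bounds then follow from plugging the two untangling costs $S(\frakm,J')$ of Proposition~\ref{prop:pi} into the recursion $\mathcal T(\frakm,J')=\mathcal T(\frakm,J'_0)+\mathcal T(\frakm,J'_1)+O(S(\frakm,J')+\M(n)\log n)$, whence the extra $\log n$ factor and the $\M(n)\log(n)^2$ term. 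Your derivation of the first bound (``Aho--Steiglitz--Ullman inverse shifts plus inversion machinery'') does not match this structure and, in particular, does not explain where $\M(n)\log(n)^2$ would come from in a non-recursive scheme.
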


As in~\cite{HoLe17}, our procedure is recursive; the recursion here is
based on the integer $\mu_1$. Given $G$ in $\K[\xi_1,\xi_2]/J'$, we
explain how to find $F$ in $\F[x_1,x_2]/I$ such that
$\pi_{\frakm,J'}(F)=G$, starting from the case $\mu_1=1$. 

\paragraph{}\label{par:leaf} If $\mu_1=1$, the ideal $J'$ is of the form
$\langle \xi_1,{\xi_2}^{\nu_2}\rangle$, and $\pi_{\frakm,J'}$ maps
$F(x_1,x_2)$ to $G:=F(\alpha_1,\xi_2+\alpha_2) \bmod {\xi_2}^{\nu_2}$.
In this case, note that the degree $n$ of $I$ is simply $d_1d_2\nu_2$.

\smallskip\noindent{{\em Step 1.}~~} Apply our univariate tangling
algorithm to $G$ in the variable $x_2$ to compute $F(\alpha_1,x_2):=
\pi_{T_2,\nu_2}^{-1} (G)\in \K'[x_2]/\langle T_2^{\mu_2}\rangle$,
working over the field $\K'=\F[y_1]/\langle T_1(y_1) \rangle$ instead of
$\F$.
This takes $O(\M(d_2 \nu_2) \log(\nu_2) + \M(d_2) \log(d_2))$ operations
$(+,\times)$ in $\K'$, together with $O(d_2)$ inversions in
$\K'$. Using Kronecker substitution for multiplications, this results
in a total of $O(\M(d_1d_2 \nu_2) \log(\nu_2) + \M(d_1d_2)
\log(d_1d_2))$ operations in $\F$.  We will use the simplified upper
bound $O(\M(d_1 d_2 \nu_2) \log(d_1 d_2\nu_2))=O(\M(n)\log(n))$.

\smallskip\noindent{{\em Step 2.}~~} The polynomial $F$ has degree
less than $d_1$ in $x_1$ and $d_2 \nu_2$ in $x_2$; for such $F$'s,
knowing $F(\alpha_1,x_2) \in \K'[x_2]/\langle T_2^{\mu_2}\rangle$ is
equivalent to knowing $F(x_1,x_2)$ in $\F[x_1,x_2]$. Thus, we are
done.

\paragraph{} Assume now that $\mu_1 > 1$, let $G$ be in $\K[\xi_1,\xi_2]/J'$ and 
let $\bar \mu:=\lceil \mu_1/2\rceil$. The following steps closely
mirror Algorithm~9 in~\cite{HoLe17}. For the cost analysis, we let
$S(\frakm, J')$ be the cost of applying
$\pi_{\frakm,J'}$ (see Proposition~\ref{prop:pi}) and 
$T(\frakm, J')$ be the cost of the recursive algorithm for
${\pi_{\frakm,J'}}^{-1}$.

\smallskip\noindent{{\em Step 1.}~~}
Let ${\bar G}:= G \bmod {\xi_1}^{{\bar\mu}}$, and compute
recursively ${\bar F}:={\pi_{\frakm, J'_0}}^{-1}(\bar G)$, with $J'_0:=J' +
\langle {\xi_1}^{\bar\mu}\rangle$. This costs $T(\frakm, J'_0)$.

\smallskip\noindent{{\em Step 2.}~~} Compute $H:=(G-\pi_{\frakm,
  J'}(\bar F)) {\rm ~div~} {\xi_1}^{\bar \mu}$, where the div operator
maps ${\xi_1}^i$ to $0$ for $i < \bar \mu$ and to
${\xi_1}^{i-\bar \mu}$ otherwise. This costs $S(\frakm, J')$.

\smallskip\noindent{{\em Step 3.}~~} 
Define $W:=\xi_1/\pi_{\frakm,J'}(T_1) \in \K[\xi_1,\xi_2]/\langle
{\xi_1}^{\mu_1},{\xi_2}^{\mu_2} \rangle$.  Because $T_1(\alpha_1)=0$
and $T'_1(\alpha_1)\ne 0$ (by our separability assumption), $W$ is
well-defined. This costs $S(\frakm, J')$ for $\pi_{\frakm,J'}(T_1)$
and $O(\M(d_1\mu_1))$ for inversion (since it involves $\xi_1$ only),
which is $O(\M(n))$.

\smallskip\noindent{{\em Step 4.}~~} Compute recursively ${\bar
  E}:={\pi_{\frakm, J'_1}}^{-1} (W^{\bar \mu} H \bmod J'_1)$, where
$J'_1$ is the colon ideal $J' : {\xi_1}^{\bar \mu}$. Since $W$ depends only on $\xi_1$, a
multiplication by $W$, or one of its powers, is done coefficient-wise
in $\xi_2$, for $O(\M(n))$ operations in $\F$. Thus, the cost to
compute $W^{\bar \mu} H \bmod J'_1$ is $O(\M(n) \log(n))$; to this, we
add $T(\frakm, J'_1)$.

\smallskip\noindent{{\em Step 5.}~~} Return $F:=\bar F + {T_1}^{\bar \mu} \bar E$.
The product $ {T_1}^{\bar \mu} \bar E$  requires no reduction,
since all its terms are in $\mathcal{B}$. Proceeding coefficient-wise
with respect to $x_2$, and using super-additivity, it
costs $O(\M(n))$.

On our example, we have $J'= \langle\xi_1^2,\xi_1\xi_2,\xi_2^2
\rangle$ (a), Step 1 uses $J_0' = \langle \xi_1,\xi_2^2
\rangle$ (b) and Steps 2-5 work on the colon ideal $J_1' = \langle
\xi_1,\xi_2\rangle$~(c). 
\vspace{1em}

\begin{tikzpicture}

  \fontfamily{times}{\fontsize{0.2cm}{0.2cm}\selectfont
      \coordinate [label={above:$(\xi_1^0,\xi_2^2)$}] (A1) at (\muthree, \nuthree);
     \coordinate [] (B1) at (\mutwo, \nuthree);
     \coordinate [label={above right:$(\xi_1^1,\xi_2^1)$}] (C1) at (\mutwo, \nutwo);
     \coordinate [] (D1) at (\muone, \nutwo);
     \coordinate [label={right:$(\xi_1^2,\xi_2^0)$}] (E1) at (\muone, \nuone);
     \coordinate [label={below left} : (a)] (O1) at (0, 0);
     \coordinate [] (M) at ( \mutwo,0);
      \draw [very thick] (A1) -- (B1) -- (C1) -- (D1) -- (E1) -- (O1) -- (A1);
      \draw [dashed] (C1) -- (M);
      
      \coordinate [label={above:$(\xi_1^0,\xi_2^2)$}] (A2) at (\muthree+\rOffset, \nuthree);
     \coordinate [] (B2) at (\mutwo+\rOffset, \nuthree);
     \coordinate [label={below left} : (b)] (O2) at (0+\rOffset, 0);
     \coordinate [label={right:$(\xi_1^1,\xi_2^0)$}] (M) at ( \mutwo+\rOffset,0);
      \draw [very thick] (A2) -- (B2) -- (M) -- (O2) -- (A2);
      
     \coordinate [label={above:$(\xi_1^1,\xi_2^1)$}] (C3) at (\mutwo+2*\rOffset-\mutwo, \nutwo);
     \coordinate [] (D3) at (\muone+2*\rOffset-\mutwo, \nutwo);
     \coordinate [label={right:$(\xi_1^2,\xi_2^0)$}] (E3) at (\muone+2*\rOffset-\mutwo, \nuone);
     \coordinate [label={below left:(c)}] (M3) at ( \mutwo+2*\rOffset-\mutwo,0);
      \draw [very thick] (C3) -- (D3) -- (E3) -- (M3) -- (C3);
  }
\end{tikzpicture}

\smallskip Let us justify that this algorithm is correct, by computing
$\pi_{\frakm,J'}(F)$, which is equal to $\pi_{\frakm, J'}(\bar F) +
\pi_{\frakm, J'}({T_1})^{\bar \mu}\pi_{\frakm, J'}(\bar E)\bmod
J'$. Note first that $\pi_{\frakm,J'}(\bar F) \bmod {\xi_1}^{\bar \mu}
= G\bmod {\xi_1}^{\bar \mu}$. Equivalently, $\pi_{\frakm,J'}(\bar F) =
G\bmod {\xi_1}^{\bar \mu} + {\xi_1}^{\bar \mu} (\pi_{\frakm,J'}(\bar
F) {\rm ~div~} {\xi_1}^{\bar \mu})$. Using the definition of $H$, this
is also $G\bmod {\xi_1}^{\bar \mu} + {\xi_1}^{\bar \mu}
(G {\rm ~div~} {\xi_1}^{\bar \mu} - H)$, that
is, $G -{\xi_1}^{\bar \mu} H$.
On the other hand, by definition of $\bar E$, we have
$$\pi_{\frakm,J'}(\bar E) = \pi_{\frakm, J'}({\pi_{\frakm,
    J'_1}}^{-1}(W^{\bar \mu} H \bmod J'_1)),$$ so that
$\pi_{\frakm,J'}(\bar E) \bmod J'_1 = W^{\bar \mu} H \bmod J'_1$. Now, $\pi_{\frakm, J'}(T_1)$ is a multiple of $\xi_1$, so
$\pi_{\frakm,J'}({T_1})^{\bar \mu}$ is a multiple of ${\xi_1}^{\bar
  \mu}$. Since ${\xi_1}^{\bar \mu} J'_1$ is in $J'$, we deduce
that $\pi_{\frakm, J'}({T_1})^{\bar \mu}\pi_{\frakm, J'}(\bar E)\bmod
J'$ is equal to $\pi_{\frakm, J'}({T_1})^{\bar \mu} W^{\bar \mu} H \bmod J'$,
and thus to ${\xi_1}^{\bar \mu} H$. Adding the two intermediate results so far,
we deduce that $\pi_{\frakm,J'}(F) = G$, as claimed.

Finally, we do the cost analysis. The runtime $\mathcal{T}(\frakm, J')$ satisfies the recurrence
relation
$$\mathcal{T}(\frakm, J') = \mathcal{T}(\frakm, J'_0) +
\mathcal{T}(\frakm, J'_1) + 
O(S(\frakm, J') + \M(n) \log(n)).$$
Using~\pref{par:leaf} and the super-linearity of $\M$, we 
see that the total cost at the leaves is $O(\M(n)\log(n))$.
Without loss of generality, we can assume that $S(\frakm, J')$ is 
super-linear, in the sense that 
$S(\frakm, J'_0)+ S(\frakm, J'_1) \le S(\frakm, J')$ holds at every level of the recursion.
Since the recursion has depth $O(\log(n))$, we get that
$\mathcal{T}(\frakm, J')$ is in $O(S(\frakm, J') \log(n)+ \M(n) \log(n)^2).$


\vspace{-1em}\subsection{Tangling for monomial ideals using duality}\label{ssec:dualTang2}

We finally present a bivariate analogue of the algorithm
introduced in Section~\ref{sec:univariate}. Since the runtimes obtained
are in general worse than those in the previous subsection, we
only sketch the construction.

All notation being as before, let $G$ be in $\K[\xi_1,\xi_2]/J'$, and
let $F \in \F[x_1,x_2]/I$ be such that $\pi_{\frakm,J'}(F) = G$.  Following ideas
from~\cite{NeRaSc17}, we now use several linear forms.  Thus, let
$\ell_1,\dots,\ell_\gamma$ be module generators of
$(\K[\xi_1,\xi_2]/J')^*$, where the ${}^*$ means that we look at
the dual of $\K[\xi_1,\xi_2]/J'$ as an $\F$-vector space. Define $\ell'_1:=
\mbox{$G\cdot \ell_1$},\dots,\ell'_\gamma:=G \cdot \ell_\gamma$, as well as
\begin{align*}
L_1:=\pi_{\frakm,J'}^\perp(\ell_1),\dots,L_\gamma:=\pi_{\frakm,J'}^\perp(\ell_\gamma)\\
L'_1:=\pi_{\frakm,J'}^\perp(\ell'_1),\dots,L'_\gamma:=\pi_{\frakm,J'}^\perp(\ell'_\gamma)
\end{align*}
in $(\F[x_1,x_2]/I)^*$. As in the one variable case, for $i=1,\dots,\gamma$
the relation $\pi_{\frakm,J'}(F)\cdot \ell_i = \ell'_i$ implies that $F \cdot L_i
= L'_i$.

The first question is to determine suitable
$\ell_1,\dots,\ell_\gamma$. Consider generators
$\xi_1^{\mu_1}\xi_2^{\nu_1},\dots,\xi_1^{\mu_t}\xi_2^{\nu_t}$ of $J'$,
with the $\mu_i$'s decreasing and $\nu_i$'s increasing as before. For
$i=1,\dots,t-1$, define $\ell_i$ by $\ell_i(\alpha_1^{d_1-1}
\alpha_2^{d_2-1} {\xi_1}^{\mu_i-1} {\xi_2}^{\nu_{i+1}-1}) =1$, all
other $\ell_i(\alpha_1^{e_1} \alpha_2^{e_2} {\xi_1}^{r_1}
{\xi_i}^{r_2})$ being set to zero. Then, following
e.g.~\cite[Section~21.1]{Eisenbud96}, one verifies that these
linear forms are module generators of $(\K[\xi_1,\xi_2]/J')^*$.

As in the univariate case, we can compute all $L_i$ and $L'_i$ by
transposing the untangling algorithm, incurring $O(t)$ times the cost
reported in Proposition~\ref{prop:piI}. Then, it remains to solve all
equations $F \cdot L_i = L'_i$, $i=1,\dots,t-1$ (this 
system is not square, unless $t=2$). 
We are not aware of a quasi-linear time algorithm to solve such
systems.  The matrix of an equation such as $F \cdot L_i = L'_i$ is
sometimes called {\em multi-Hankel}~\cite{BeBoFa15}. It can be solved using structured linear algebra
techniques~\cite{BeBoFa15} (Here, we have several such systems to solve at once;
this can be dealt with as in~\cite{ChJeNeScVi15}). As
in~\cite{BeBoFa15}, using the results from~\cite{BoJeMoSc17} on
structured linear system solving, we can find $F$ in Monte
Carlo time $O((st)^{\omega-1}\M(t n) \log(t n))$, with $s:=\min(\mu_1,
\nu_t)$, where $\omega$ is the exponent of linear algebra (the best
value to date is $\omega \le 2.38$~\cite{CoWi90,LeGall14}).
Thus, unless both $s$ and $t$ are small, the overhead induced by the
linear algebra phase may make this solution inferior to the one in
the previous subsection.

\vspace{-1em}
\subsection{An Application}

To conclude, we describe a direct application of our results to the
complexity of multiplication and inverse in $\A:=\F[x_1,x_2]/I$: under
assumptions ${\bf H}_2$ and ${\bf H}_3$, both can be done in the time
reported in Proposition~\ref{prop:piI}, to which we add $O(\M(n)
\log(n)^3)$ in the case of inversion.  Even though the algorithms are
not quasi-linear time in the worst case, to our knowledge no previous
non-trivial algorithm was known for such operations.

The algorithms are simple: untangle the input, do the multiplication,
resp.\ inversion, in $\A':=\K[\xi_1,\xi_2]/J'$, and tangle the result.
The cost of tangling dominates that of untangling. The appendix below
discusses the cost of arithmetic in $\A'$: multiplication and inverse
take respectively $O(\M(\mu)\log(\mu))$ and $O(\M(\mu)\log(\mu)^2)$
operations $(+,-,\times)$ in $\K$, plus one inverse in $\K$ for the
latter. Using Kronecker substitution, the runtimes become
$O(\M(n)\log(n))$ and $O(\M(n)\log(n)^2)$ operations in $\K$, with $n
= \deg(I)$; this is thus negligible in front of the cost for tangling.


\vspace{-1em}
\section*{Appendix: Bivariate power series arithmetic}

We prove that for a field $\F$ and zero-dimensional monomial ideal $I
\subset \F[x_1,x_2]$, multiplication and inversion in $\F[x_1,x_2]/I$
can be done in softly linear time in $\delta:=\deg(I)$,
starting with multiplication.  

For an ideal such as $I = \langle x_1^{\mu}, x_2^\nu \rangle$, the
claim is clear. Indeed, to multiply elements $F$ and $G$ of
$\F[x_1,x_2]/I$ we multiply them as bivariate polynomials and discard
unwanted terms.  Bivariate multiplication in partial degrees less than
$\mu$, resp.\, $\nu$, can be done by Kronecker substitution in time
$O(\M(\mu \nu))=O(\M(\delta))$, which is softly linear in $\delta$, as
claimed. However, this direct approach does not perform well for cases
such as $I = \langle x_1^\mu, x_1 x_2, x_2^\nu\rangle$: in this case,
for $F$ and $G$ reduced modulo $I$, the product $FG$ as polynomials
has $\mu \nu$ terms, but $\delta = \mu + \nu-1$. The following result
shows that, in general, we can obtain a cost almost as good as in the
first case, up to a logarithmic factor. Whether this extra factor can
be removed is unclear to us. In the rest of this appendix, we write $I =
\langle x_1^{\mu_1}x_2^{\nu_1}, x_1^{\mu_2} x_2^{\nu_2}, \dots,
x_1^{\mu_t}x_2^{\nu_t} \rangle$, with $\mu_i$'s decreasing, $\nu_i$'s
increasing and $\nu_1=\mu_t=0$.

\begin{proposition}
  Let $I$ be a zero-dimensional monomial ideal in $\F[x_1,x_2]$
  of degree $\delta$.  Given $F,G$ reduced modulo $I$, one can compute
  $FG\bmod I$ in $O(\M(\delta)\log(\delta))$ operations $(+,-,\times)$ in $\F$.
\end{proposition}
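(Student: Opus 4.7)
The plan is divide-and-conquer on the shape of the staircase of $I$. Let $S \subset \mathbb{N}^2$ denote the basis of standard monomials of $\F[x_1,x_2]/I$, so $|S| = \delta$. As a base case, when $I = \langle x_1^{\mu_1}, x_2^{\nu_t}\rangle$ is rectangular ($t=2$), $\delta = \mu_1\nu_t$ and $FG \bmod I$ is just the bivariate polynomial product truncated to the rectangle; via Kronecker substitution this costs $O(\M(\mu_1\nu_t)) = O(\M(\delta))$, as already noted at the start of the appendix.

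For the inductive step with $t > 2$, I split the staircase in half along $x_2$. Pick an integer $\nu^\star \in \{1,\dots,\nu_t\}$ such that $\delta_L := |S \cap \{b < \nu^\star\}|$ lies in $[\delta/3, 2\delta/3]$; one always exists, after possibly first peeling a ``too tall'' horizontal slice off as a rectangle handled by the base case. Define
$$I_L := I + \langle x_2^{\nu^\star}\rangle, \qquad I_U := (I : x_2^{\nu^\star}),$$
two zero-dimensional monomial ideals whose standard bases are $S_L := S \cap \{b < \nu^\star\}$ and $S_U := \{(a,b-\nu^\star):(a,b)\in S,\ b \ge \nu^\star\}$, of sizes $\delta_L$ and $\delta_U := \delta - \delta_L$ summing to $\delta$. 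Decomposing $F = F_1 + x_2^{\nu^\star}F_2$ and $G = G_1 + x_2^{\nu^\star}G_2$ with $F_1, G_1$ reduced modulo $I_L$ and $F_2, G_2$ modulo $I_U$, we have
$$FG \equiv F_1G_1 + x_2^{\nu^\star}(F_1G_2 + F_2G_1) + x_2^{2\nu^\star}F_2G_2 \pmod I.$$
The low-$x_2$ portion of $FG$ (coefficients of $x_2^b$ for $b < \nu^\star$) is exactly $F_1G_1 \bmod I_L$, one multiplication in the smaller ring $\F[x_1,x_2]/I_L$. The high-$x_2$ portion, after dividing out $x_2^{\nu^\star}$, reduces to a constant number of multiplications in $\F[x_1,x_2]/I_U$: apply the Karatsuba identity $F_1G_2 + F_2G_1 = (F_1+F_2)(G_1+G_2) - F_1G_1 - F_2G_2$ with all operands first truncated (for free, by discarding out-of-support monomials) to the basis of $I_U$. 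This yields the recurrence $T(\delta) \le T(\delta_L) + T(\delta_U) + O(\M(\delta))$ (up to a constant number of recursive calls on each side), which by super-linearity of $\M$ unrolls to $T(\delta) = O(\M(\delta)\log\delta)$.

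The main obstacle is ensuring that the per-level glue work genuinely fits inside the $O(\M(\delta))$ budget: the bounding rectangle of $F_1, G_1$ has area $\mu_1 \nu^\star$, which may far exceed $\delta_L$ when the staircase is long and thin, so a naive bivariate multiplication inside that rectangle would be too expensive. I would address this by \emph{alternating} the splitting direction: at each recursion level, slicing along whichever of the two axes is currently over-extended relative to the basis size. This keeps the bounding rectangles of the sub-staircases within a constant factor of their basis sizes throughout the recursion, so the level-by-level cost remains $O(\M(\delta))$ and the overall cost is $O(\M(\delta)\log\delta)$ as claimed.
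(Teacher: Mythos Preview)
Your divide-and-conquer plan is natural, but the recurrence analysis is where it fails. You write $T(\delta)\le T(\delta_L)+T(\delta_U)+O(\M(\delta))$ ``up to a constant number of recursive calls on each side'' and conclude $T(\delta)=O(\M(\delta)\log\delta)$. That conclusion is only valid when there are \emph{exactly} two subcalls whose sizes sum to~$\delta$; super-linearity of $\M$ then bounds the total work at each of the $O(\log\delta)$ levels by $O(\M(\delta))$. Your scheme does not achieve this. The low part needs one recursive product $F_1G_1\bmod I_L$. For the high part, the Karatsuba identity requires three products modulo $I_U$: $\bar F_1\bar G_1$, $F_2G_2$, and $(\bar F_1+F_2)(\bar G_1+G_2)$. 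Even after observing that $\bar F_1\bar G_1\bmod I_U$ can be done directly in $O(\M(\delta))$ (since $\bar F_1,\bar G_1$ are supported in the rectangle $[0,\mu^\star)\times[0,\nu^\star)\subset S$, where $\mu^\star$ is the $x_1$-extent of $S$ at height $\nu^\star$), you are still left with one call in $I_L$ and two in $I_U$. On a balanced split this is $T(\delta)=3T(\delta/2)+O(\M(\delta))$, hence $T(\delta)=\Theta(\delta^{\log_2 3})$, not quasi-linear. The ``alternating direction'' idea does not change this count. There is also a smaller omission: the high part of $FG\bmod I$ receives a carry $(F_1G_1\operatorname{div}x_2^{\nu^\star})\bmod I_U$ that your call $F_1G_1\bmod I_L$ does not produce; this one is fixable in $O(\M(\delta))$ per level, but it should be stated.

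The paper avoids recursion entirely. It first shows (A.1) that every standard monomial of $I$ lies in one of the $t-1$ maximal inscribed rectangles $[0,\mu_i)\times[0,\nu_{i+1})\subset S$, so $FG\bmod I$ can be assembled from the $t-1$ rectangular products $FG\bmod\langle x_1^{\mu_i},x_2^{\nu_{i+1}}\rangle$, each costing $O(\M(\mu_i\nu_{i+1}))=O(\M(\delta))$; total $O(t\,\M(\delta))$. Then (A.2) it coarsens $I$ to a monomial ideal $I'\subset I$ obtained by keeping only generators along which $\mu_i$ at least halves: $I'$ has $s=O(\log\delta)$ generators and degree $\delta'\le 2\delta$, so applying A.1 to $I'$ and discarding the monomials outside $S$ gives the claimed $O(\M(\delta)\log\delta)$.
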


\vspace{-1mm}

\noindent{\bf A.1.}  We start by giving an algorithm
of complexity $O(t \M(\delta))$ for multiplication modulo $I$.
Let $F$ and $G$ be two polynomials reduced modulo $I$. To compute $H:=FG
\bmod I$ it suffices to compute $H_i:=FG \bmod \langle x_1^{\mu_i},
x_2^{\nu_{i+1}}\rangle$ for $i = 1,\dots,t-1$; all monomials in $H$
appear in one of the $H_i$'s (some of them in several $H_i$'s).  We
saw that multiplication modulo $ \langle x_1^{\mu_i},
x_2^{\nu_{i+1}}\rangle$ takes $O(\M(\mu_i \nu_{i+1}))$ operations in
$\F$, which is $O(\M(\delta))$, so the total cost is $O(t
\M(\delta))$.

\smallskip\noindent{\bf A.2.}
In the general case,  define
$i_1:=1$. We let $i_2 \le t$ be the smallest index greater than $i_1$
and such that $\mu_{i_2} < \mu_{i_1}/2$, and iterate the process to
define a sequence $i_1 = 1 < i_2 < \cdots < i_s=t$. The ideal $I'$ is
then defined by the monomials
$x_1^{\mu_{i_1}}x_2^{\nu_{i_1}},\dots,x_1^{\mu_{i_s}}x_2^{\nu_{i_s}}$.
By construction, $I$ contains $I'$; hence, to compute a product 
modulo $I$, we may compute it modulo $I'$ and discard unwanted terms.

Multiplication modulo $I'$ is done using the algorithm of {\bf A.1},
in time $O(s \M(\delta'))$, with $\delta':=\deg(I')$.  Hence, we need
to estimate the degree $\delta'$ of $I'$, as well as its number of
generators $s$.

The degree $\delta$ of $I$ can be written as $\sum_{r=1}^{s-1}
\sum_{i=i_r}^{i_{r+1}-1} \mu_i (\nu_{i+1}-\nu_i)$; this is simply
counting the number of standard monomials along the rows. For a given
$r$, all indices $i$ in the inner sum are such that $\mu_i \ge
\mu_{i_r}/2$, so the sum is at least $1/2 \sum_{r=1}^{s-1} \mu_{i_r}
(\nu_{i_{r+1}}-\nu_{i_r})$, which is  the degree of $I'$.
Hence, $\delta \ge 1/2 \delta'$, that is, $\delta'\le 2 \delta$.  To
estimate the number $s$,  the inequalities $\mu_{i_{r+1}} <
\mu_{i_r}/2$ for all $r \le s$ imply that $\mu_{i_{s-1}} <
\mu_1/2^s$. We deduce that $2^s \le \mu_1/\mu_{i_{s-1}} \le \mu_1$
(since $\mu_{i_{s-1}} \ge 1$), which itself is at most $\delta$.
Thus, $s \in O(\log(\delta))$. Overall, the cost of multiplication modulo
$I'$, and thus modulo $I$, is $O(\M(\delta) \log(\delta))$.

\begin{corollary}
  For $I$ as in the previous proposition and $F$ reduced modulo $I$,
  with $F(0,0) \ne 0$, $1/F \bmod I$ can be computed in
  $O(\M(\delta)\log(\delta)^2)$ operations $(+,-,\times)$ in $\F$, 
  and one inverse.
\end{corollary}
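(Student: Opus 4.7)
The plan is to use Newton iteration on a telescoping sequence of monomial ideals. Because $F(0,0) \ne 0$, the constant term of $F$ is a unit in $\F$, and we can seed Newton's iteration with $G_0 := 1/F(0,0) \in \F$ (this accounts for the one inverse in $\F$). For an initial reduction modulo $\langle x_1, x_2\rangle$ we have $F G_0 \equiv 1$.

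First I would set up the sequence of ideals. Let $N := \mu_1 + \nu_t$, so that $\langle x_1,x_2\rangle^N \subset I$ (any monomial of total degree $\geq N$ is divisible by some generator of $I$). Define
\[
I_k \;:=\; I \,+\, \langle x_1, x_2\rangle^{2^k}, \qquad k = 0, 1, 2, \ldots, K,
\]
with $K := \lceil \log_2 N\rceil$, so $I_K = I$ and $I_0 = \langle x_1,x_2\rangle$. Each $I_k$ is a zero-dimensional monomial ideal, and $I_k \supset I$ gives $\deg(I_k) \le \delta$. Crucially,
\[
I_k^2 \;=\; (I + \langle x_1,x_2\rangle^{2^k})^2 \;\subset\; I + \langle x_1,x_2\rangle^{2^{k+1}} \;=\; I_{k+1}.
\]

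Next I would apply the standard Newton update
\[
G_{k+1} \;:=\; G_k\bigl(2 - F G_k\bigr) \bmod I_{k+1}.
\]
If $1 - F G_k \in I_k$, then $1 - F G_{k+1} = (1 - F G_k)^2 \in I_k^2 \subset I_{k+1}$, so by induction $F G_K \equiv 1 \bmod I_K = I$ and $G_K$ is the desired inverse. Each Newton step performs $O(1)$ multiplications modulo the monomial ideal $I_{k+1}$; by the previous proposition, each such multiplication costs $O(\M(\deg(I_{k+1}))\log(\deg(I_{k+1}))) = O(\M(\delta)\log(\delta))$.

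Finally, summing over the $K = O(\log N) = O(\log \delta)$ iterations (since $N \le 2\delta$) gives a total cost of $O(\M(\delta)\log(\delta)^2)$ operations $(+,-,\times)$ in $\F$, plus the single inversion used for $G_0$, as claimed. The only delicate point is verifying that each $I_k$ is a \emph{monomial} ideal so that the multiplication proposition applies; this is immediate because the sum of two monomial ideals is monomial. No other obstacle is anticipated: reductions modulo $I_{k+1}$ are free in monomial bases (just discard monomials lying in $I_{k+1}$), and the recurrence on $1 - F G_k$ is the usual quadratic-convergence identity.
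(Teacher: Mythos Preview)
Your proof is correct and follows the same Newton-iteration strategy as the paper, but with a different filtration of monomial ideals. The paper halves only the $x_1$-exponent at each step, taking $\bar I := \langle x_1^{\lceil \mu_1/2\rceil}, x_1^{\mu_2}x_2^{\nu_2},\dots,x_2^{\nu_t}\rangle$ and recursing until $\mu_1=1$, where the base case is a univariate Newton inversion in $\F[x_2]/\langle x_2^{\nu_2}\rangle$; it then uses $\bar I^2 \subset I$ for a single Newton lift. You instead filter by powers of the maximal ideal, $I_k = I + \langle x_1,x_2\rangle^{2^k}$, starting from the scalar inverse. Both filtrations have $O(\log\delta)$ levels, each level is a monomial ideal of degree at most $\delta$, and each Newton step invokes the multiplication proposition once at cost $O(\M(\delta)\log\delta)$; the total $O(\M(\delta)\log(\delta)^2)$ and the single field inverse match. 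Your version is a bit more symmetric in the two variables, while the paper's version reaches a univariate base case; neither choice affects the asymptotics or the argument's validity.
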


\vspace{-1mm}

\noindent{\bf A.3.} We proceed by
induction using Newton iteration.  If $\mu_1 =1$ then $I
= \langle x_1,x_2^{\nu_2}\rangle$, so inversion modulo $I$ is
inversion in $\F[x_2]/\langle x_2^{\nu_2}\rangle$. It can be done in
time $O(\M(\delta))$ using univariate Newton iteration, involving only
the inversion of the constant term of the input.

Otherwise, define $\bar\mu:= \lceil \mu_1/2\rceil$, and let $\bar I$ be the ideal
with generators  $ x_1^{\bar\mu}, x_1^{\mu_2} x_2^{\nu_2}, \dots,
x_2^{\nu_t}$ (all monomials in this list with $\mu_i \ge \bar\mu$ may be
discarded). Given $F$ in $\F[x_1,x_2]/I$, we start by computing the
inverse of $\bar G$ of $\bar F:=F \bmod \bar I$ in $\F[x_1,x_2]/\bar I$. Since ${\bar I}^2$
is contained in $I$, knowing $\bar G$, one step of Newton iteration allows
us to compute $G:=1/F \bmod I$ as $G = 2\bar G -\bar G^2 F \bmod I$.
Using the previous proposition, we deduce $G$ from $\bar G$ in
$O(\M(\delta)\log(\delta))$ operations. We repeat the 
recursion for $O(\log(\delta))$ steps, and the degrees of the ideals 
we consider  decrease, so the overall runtime is 
$O(\M(\delta)\log(\delta)^2)$.
 
\bibliographystyle{plain}   
\bibliography{bib}
  
\end{document}